\pdfoutput=1
\documentclass[11pt]{article}

\usepackage[a4paper,left=20mm,right=20mm,top=30mm,bottom=20mm]{geometry}
\usepackage{lmodern}
\usepackage{microtype}
\usepackage[square,authoryear]{natbib}
\usepackage{float}
\usepackage{titletoc}

\IfFileExists{arXiv/headers.tex}{\def\arxivroot{arXiv/}}{\def\arxivroot{}}
\makeatletter
\edef\input@path{{\arxivroot}}
\makeatother

\usepackage{amsmath,amsfonts,bm}

\def\eqref#1{equation~\ref{#1}}

\def\1{\bm{1}}

\DeclareMathAlphabet{\mathsfit}{\encodingdefault}{\sfdefault}{m}{sl}
\SetMathAlphabet{\mathsfit}{bold}{\encodingdefault}{\sfdefault}{bx}{n}

\usepackage[T1]{fontenc}
\usepackage{amsfonts, amssymb, amsmath, amsthm, graphicx, color, tikz, booktabs, multirow, bm, cases, mathtools,  mathrsfs, pgfplots, subcaption, csvsimple,algorithm,algorithmic}
\usetikzlibrary{intersections, pgfplots.fillbetween, shapes, arrows, positioning, decorations.markings}
\usetikzlibrary{matrix, fit, shapes}
\definecolor {processblue}{cmyk}{0.96,0,0,0}
\tikzstyle{int}=[draw, fill=blue!20, minimum size=2em]
\tikzstyle{init} = [pin edge={to-,thin,black}]
\usepackage{pgfplotstable}
\usepackage{pgfplots}
\pgfplotsset{compat=1.14}

\newtheorem{rem}{Remark}[section]

\newtheorem{thm}{Theorem}[section]

\newcommand{\ex}[2]{{\ifx&#1& \mathbb{E} \else {\mathbb{E}_{#1}} \fi \left[#2\right]}}
\newcommand{\indp}{\perp\!\!\!\perp} 

\newcommand{\br}[1]{\left[#1\right]}
\newcommand{\abs}[1]{\left|#1\right|}

\newcommand{\kl}[2]{\mathrm{D_{KL}}\left(#1 \|#2 \right)}

\usepackage[toc,page,header]{appendix}
\usepackage{enumitem}
\usepackage{thm-restate}

\newcommand{\calA}{\mathcal A}

\newcommand{\calW}{\mathcal W}

\newcommand{\calZ}{\mathcal Z}
\newcommand{\CMI}{\mathrm{CMI}}
\newcommand{\eCMI}{\mathrm{eCMI}}

\newcommand{\TV}{\mathrm{TV}}
\newcommand{\Bern}{\mathrm{Bern}}
\newcommand{\Unif}{\mathrm{Unif}}

\newcommand{\ones}{\mathbf 1}

\newcommand{\defeq}{\coloneqq}

\newcommand{\eqnref}[1]{Eq.~(\ref{#1})}

\usepackage{xcolor}
\usepackage[normalem]{ulem}

\usepackage{hyperref}
\usepackage{url}
\usepackage[capitalize,noabbrev]{cleveref}
\graphicspath{{arXiv/}{./}}
\hypersetup{hidelinks,
  pdftitle={Learning Rules, Memorizing Facts: An Information-Theoretic View of Hallucination},
  pdfauthor={Gengze Xu, Bob Guo, Yifei Hu, Xiangyu Wu, Yong Liu, Ziqiao Wang}}
\title{Learning Rules, Memorizing Facts:\\An Information-Theoretic View of Hallucination}

\author{\textbf{Gengze Xu}\textsuperscript{1}\quad
\textbf{Anxin Guo}\textsuperscript{2}\quad
\textbf{Yifei Hu}\textsuperscript{3}\quad
\textbf{Xiangyu Wu}\textsuperscript{3}\\[3pt]
\textbf{Yong Liu}\textsuperscript{1}\quad
\textbf{Ziqiao Wang}\textsuperscript{4}\\[8pt]
\small\textsuperscript{1}Renmin University of China\quad
\small\textsuperscript{2}Northwestern University\\
\small\textsuperscript{3}Kuaishou Technology\quad
\small\textsuperscript{4}Tongji University\\[6pt]
\texttt{\footnotesize xugengze@ruc.edu.cn\quad bobguo2023@u.northwestern.edu}\\
\texttt{\footnotesize huyifei@kuaishou.com\quad wuxiangyu06@kuaishou.com}\\
\texttt{\footnotesize liuyonggsai@ruc.edu.cn\quad ziqiaowang@tongji.edu.cn}}
\date{}

\begin{document}

\maketitle

\begin{abstract}
Understanding the relationship between generalization and hallucination requires clarifying the role of memorization. We study this relationship from an information-theoretic perspective through a Selector-Density Conditional Mutual Information (CMI) framework. In the sparse random-fact limit, a uniformly bounded information budget per fact forces a nonvanishing false-positive rate at recall bounded away from zero, even when all facts are observed during training. In a rules-and-facts setting, we decompose model–data mutual information into rule learning, atomic-fact memory, and excess memorization, revealing competition between rule learning and factual memory even without excess memorization. A joint budget bound quantifies the resulting constraint on atomic hallucination under perfect atomic recall. Moreover, an additive Gaussian model shows that maintaining high recall for atomic and composite facts incurs false positives even under optimal data mixing. Finally, controlled relational experiments hint how training exposure, rule complexity, and model capacity shape these trade-offs.
\end{abstract}

\section{Introduction}

Large language models (LLMs) exhibit broad capabilities across language, coding, and reasoning tasks,  
along with impressive generalization abilities such as zero-shot~\citep{sanh2022multitask}, few-shot~\citep{brown2020language}, and in-context learning~\citep{garg2022what}.
Yet this ability to generalize does not guarantee factuality: a high-confidence claim about an unseen case may be either a valid inference or a hallucination, namely a plausible but false statement. 
Such hallucinations can propagate false premises into downstream reasoning and decisions. 
Recent theoretical work connects hallucination to Is-It-Valid (IIV) classification~\citep{kalai2025language,kalai2026evaluating} or membership testing~\citep{guo2026hallucination}.
These perspectives raise a fundamental question about the information retained during training: \emph{what must a model learn or memorize to distinguish facts from plausible non-facts?}

Memorization plays a central, but nuanced, role in this question. Generalization theory often controls the gap between training and population risks by limiting dependence on the particular training sample. At the same time, memorization can be necessary for accurate prediction, especially for rarely observed subpopulations and facts with little reusable structure~\citep{feldman2020does,brown2021memorization}. Learning a general rule cannot replace retaining the arbitrary facts to which that rule applies. This motivates distinguishing atomic-fact memory, which preserves factual information not determined by reusable rules, from excess memorization, which records which examples happened to be observed after the underlying factual world is fixed. \emph{How do these distinct forms of retained information jointly constrain generalization and factuality under a limited information budget?}

We address this question through a Selector-Density Conditional Mutual Information (CMI) framework. CMI measures how much a learned model reveals about the training-set selection when conditioned on a supersample of candidate examples~\citep{steinke2020reasoning}. Using a uniform selection of $n$ examples from $N$ candidates, we connect selector information to the distinguishability of evaluation outputs on selected and unselected examples: the information an individual output reveals about selection equals a weighted Jensen-Shannon divergence between its two conditional laws. The selector and evaluator determine the interpretation. When selection denotes training membership and the evaluator is a loss, small selector information controls the expected generalization gap. When selection denotes truth and the evaluator is a factuality score, accurate discrimination requires sufficient selector information to separate facts from non-facts. 

We first make the latter role explicit in a sparse random-fact setting, following the membership-testing formulation of~\cite{guo2026hallucination}. A uniformly selected subset of candidate claims constitutes the entire factual world and is fully observed during training, while all remaining claims are false. Since truth has no inferable structure, factual discrimination depends on retaining information about this subset. We derive lower bounds on the false-positive hallucination rate at prescribed fact recall. In the sparse-fact limit, a uniformly bounded information budget per fact prevents false positives from vanishing whenever recall remains bounded away from zero. Thus, post-hoc thresholding alone cannot overcome insufficient factual memory. The contrast with generalization concerns the information being measured, that is, indistinguishability of training and held-out losses controls a generalization gap, whereas indistinguishability of factual and non-factual scores obstructs factual discrimination. In this sense, hallucination can be viewed as an ``unwanted generalization'' of factual confidence from observed facts to unseen non-facts.

To incorporate knowledge that supports inference beyond observed examples, we introduce a rules-and-facts setting in which reusable rules and atomic facts jointly determine the factual world. Here, truth and training membership are distinct: an unobserved claim may still be true. We decompose model-data mutual information into rule learning, atomic-fact memory conditional on the rules, and excess memorization of the training selection. This separates the information supporting factual knowledge from residual dependence 
on the observed sample. Under a shared information budget, information committed to rule recovery and training selection limits what remains for atomic-fact memory. Consequently, even accurate rule learning and a small generalization gap do not by themselves ensure factual discrimination; namely, 
rule-independent facts still require memory. Our bounds quantify the constraint on atomic false positives under perfect atomic recall.

Our contributions are summarized as follows:
\begin{itemize}[leftmargin=1.25em]
    \item We connect the CMI generalization framework with information requirements for factual discrimination. In the sparse random-fact setting, 
    we show that a bounded information budget per fact forces a nonvanishing hallucination rate at recall bounded away from zero, even when every true fact is observed during training (Theorem~\ref{thm:fpr-random-fact}). Post-hoc thresholding therefore cannot compensate for insufficient factual memory.
    To the best of our knowledge, this is the first application of the CMI framework to hallucination analysis.

    \item In a rules-and-facts setting, we separate model-data mutual information into rule learning, atomic-fact memory and excess memorization (Lemma~\ref{3MI}). In the sparse atomic-fact regime, our joint budget bound quantifies how information committed to rule recovery and training selection constrains atomic hallucination suppression under perfect recall (Theorem~\ref{thm:joint-information-budget}). Thus, accurate rule learning and a small generalization gap do not by themselves ensure factuality.

    \item In a shared-location Gaussian model, we show how the training mixture redistributes recall between atomic and composite facts (Theorem~\ref{thm:gaussian-recall}), while their separation and dispersion impose a hallucination cost even under optimal mixing  (Corollary~\ref{cor:optimal-mixture-recall}). Controlled relational experiments show that adding composite facts during training need not improve atomic factual discrimination, and that greater rule complexity can increase its capacity requirements. Larger models ease these trade-offs, providing qualitative evidence that the factual benefits of additional structured data depend on the capacity available to retain atomic facts.
\end{itemize}

\paragraph{Other Related Literature.} Building on CMI~\citep{steinke2020reasoning}, functional CMI~\citep{harutyunyan2021informationtheoretic}, evaluated CMI~\citep{hellstrom2022a}, and loss-difference CMI~\citep{wang2023tighter} use successive post-processing to obtain tighter generalization bounds.
Leave-one-out~\citep{haghifam2022understanding,rammal2022on} and leave-$m$-out CMI~\citep{lu2026unified} further vary the sampling mechanism. These works establish CMI as a well-developed framework for generalization. However, its implications for hallucination remain unexplored. Existing analyses of hallucination instead draw on Good--Turing estimation~\citep{kalai2024calibrated}, Kolmogorov complexity~\citep{pan2025understanding}, and statistical learning theory~\citep{kalavasis2025limits}. 
Our distinction between factual knowledge and residual sample dependence also relates to excess memorization~\citep{feldman2025trade} and unintended memorization~\citep{morris2026how}.
Moreover, knowledge acquisition under finite capacity has been studied through data pruning~\citep{ye2026cram} and data mixing~\citep{gu2026data}. Appendix~\ref{sec:related-work} provides further discussion on related work.

\section{Preliminaries}

\textbf{Notations.} Throughout this paper, unless otherwise stated, capital letters (e.g., $X$) denote random variables, while the corresponding lowercase letters (e.g., $x$) denote their realizations. Expectations are expressed as $\mathbb{E}_X[\cdot]$. 
All logarithms are natural and all information quantities are in nats.  Let the binary entropy be $h(a)\coloneqq -a\log a-(1-a)\log(1-a)$. For probability measures $P,Q$ on a common measurable space, $\kl{P}{Q}$ denotes  the KL divergence.

\textbf{Supersample Setting.} Let $\mu$ be an unknown distribution over the data instance space $\mathcal{Z}$. Let $\widetilde Z=(\widetilde Z_1,\ldots,\widetilde Z_N)\overset{i.i.d.}{\sim}\mu^{\otimes N}$ be a supersample of candidate instances. Each $\widetilde Z_j$ may represent a data--label pair in supervised learning or a plausible claim in a generative-model setting. For $n\in[1,N)$, we draw a selector $K\sim\operatorname{Unif}\{A\subseteq[N]:|A|=n\}$ independently of $\widetilde Z$ and form the training sample $S:=\widetilde Z_K$. The remaining $m:=N-n$ instances constitute the holdout sample, and we define the selector density as $\rho:=\frac{n}{N}$. Note that this construction differs from the original CMI framework~\citep{steinke2020reasoning}, which arranges $2n$ instances into $n$ pairs and independently selects one instance from each pair with equal probability. The distinction lies in the selector's support, that is, the original selector is uniform over the $2^n$ subsets containing exactly one instance per pair, whereas ours is uniform over all $\binom{N}{n}$ subsets of size $n$. Thus, even when $N=2n$, the two constructions generally induce different selector laws. Our construction is closer to the leave-one-out or leave-$m$-out setting~\citep{haghifam2022understanding,rammal2022on,lu2026unified}.

\textbf{CMI and eCMI.} Let $\mathcal{W}$ be the hypothesis space, and let $\mathcal{A}:\mathcal{Z}^n\to\mathcal{W}$ be a learning algorithm that maps the training sample $S$ to a model $W:=\mathcal A(S)$. Given a measurable evaluator
$q:\mathcal W\times\mathcal Z\to\mathbb{R}$, define the evaluation outputs  $T^N:=(T_1,\ldots,T_N)$ where $T_i:=q(W,\widetilde Z_i)$. 
We define the \emph{selector-density CMI} and
its \emph{evaluated counterpart} as $\CMI_{\rho}(\calA)\coloneqq  I(W;K\mid\widetilde Z)$, and $\eCMI^q_{\rho}(\calA)\coloneqq  I(T^N;K\mid\widetilde Z)$. These quantities measure how much the model and its evaluated behavior,
respectively, reveal about the training-set selection given the supersample. Since 
$T^N$ is a deterministic function of $(W,\widetilde Z)$, the data-processing inequality gives $\eCMI^q_{\rho}(\mathcal A)\leq\CMI_{\rho}(\mathcal A)$.

At the individual level, let $U_i\coloneqq\ones\{i\in K\}$ denote the membership indicator of $\widetilde Z_i$. Uniform subset selection implies $U_i\sim\Bern(\rho)$. The mutual information $I(U_i;T_i)$ quantifies how much the evaluation output $T_i$ reveals about whether $\widetilde Z_i$ was selected for training, providing an information-theoretic perspective on membership testing~\citep{steinke2020reasoning}.

For a fixed coordinate $i\in[N]$, define the selected-item and unselected-item laws $P\coloneqq\operatorname{Law}(T_i\mid U_i=1)$ and $Q\coloneqq\operatorname{Law}(T_i\mid U_i=0)$, where we omit their dependence on $i$ for notational simplicity. The mixture $M_\rho\defeq\rho P+(1-\rho)Q$ is the marginal law of $T_i$. The corresponding weighted
Jensen--Shannon (JS) divergence is $J_\rho(P,Q)\defeq\rho\kl{P}{M_\rho}+(1-\rho)\kl{Q}{M_\rho}$. 
The following lemma connects this divergence to individual
selector information.

\begin{restatable}{lem}{selectorJSidentity}\label{cmi-js}
Under the preceding notation, for any fixed $i\in[N]$, we have
\begin{align}\label{eq:cmi-js}
    I(U_i;T_i)=J_\rho(P,Q).
\end{align}
\end{restatable}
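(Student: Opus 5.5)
The identity should follow from the standard decomposition of mutual information as an expected KL divergence between conditional laws and the marginal law. Since $U_i$ is binary, we have
\[
I(U_i;T_i)=\mathbb{E}_{U_i}\big[\kl{\operatorname{Law}(T_i\mid U_i)}{\operatorname{Law}(T_i)}\big]
=\Pr(U_i=1)\,\kl{P}{\operatorname{Law}(T_i)}+\Pr(U_i=0)\,\kl{Q}{\operatorname{Law}(T_i)}.
\]
This holds in full generality as long as we write $I(U_i;T_i)=\kl{P_{U_iT_i}}{P_{U_i}\otimes P_{T_i}}$ and apply the chain rule for KL divergence, disintegrating along the finite coordinate $U_i$. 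No densities for $T_i$ are needed: because $U_i$ takes only two values, the conditional laws $P$ and $Q$ are defined elementarily, provided $\Pr(U_i=u)>0$ for both $u\in\{0,1\}$.

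First, I will establish that $U_i\sim\Bern(\rho)$ and that $0<\rho<1$. The uniform distribution over $n$-subsets of $[N]$ is exchangeable, so $\Pr(i\in K)=\binom{N-1}{n-1}/\binom{N}{n}=n/N=\rho$. The assumption $1\le n<N$ then gives $\rho\in(0,1)$, so both conditional laws are well defined. Next, I will identify the marginal law of $T_i$. By the law of total probability, $\operatorname{Law}(T_i)=\rho P+(1-\rho)Q=M_\rho$. Substituting the weights $\rho$ and $1-\rho$ and the marginal $M_\rho$ into the display above yields exactly $J_\rho(P,Q)$, which is \eqref{eq:cmi-js}.

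The only point needing care is measure-theoretic. We must ensure that both KL terms are well defined and that the chain-rule identity holds even when they are infinite. In fact both terms are finite. Since $P\le \rho^{-1}M_\rho$ and $Q\le(1-\rho)^{-1}M_\rho$, we have $P,Q\ll M_\rho$ with Radon--Nikodym derivatives bounded by $1/\rho$ and $1/(1-\rho)$. Hence $J_\rho(P,Q)\le h(\rho)\le\log 2$, and no $\infty-\infty$ issues arise. The joint law $P_{U_iT_i}$ has density $(u,t)\mapsto \frac{dP_{T_i\mid U_i=u}}{dM_\rho}(t)$ with respect to $P_{U_i}\otimes M_\rho$. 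Taking the expectation of its logarithm and splitting over $u\in\{0,1\}$ gives the identity directly. The dependence of $T_i$ on the random $(W,\widetilde Z)$ plays no role beyond defining $\operatorname{Law}(T_i\mid U_i)$.
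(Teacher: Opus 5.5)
Your proof is correct and takes essentially the same route as the paper. Both write $I(U_i;T_i)$ as the $U_i$-average of the KL divergence between $\operatorname{Law}(T_i\mid U_i)$ and the marginal, identify that marginal as $M_\rho=\rho P+(1-\rho)Q$ via $U_i\sim\Bern(\rho)$, and split over $u\in\{0,1\}$. Your added observation that $dP/dM_\rho\le 1/\rho$ and $dQ/dM_\rho\le 1/(1-\rho)$, so every term is finite and $J_\rho(P,Q)\le h(\rho)$, is a correct detail that the paper leaves implicit.
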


\begin{rem}
\label{rem:cmi-membership}
\eqnref{eq:cmi-js} identifies the information that $T_i$ reveals about
membership with the prior weighted JS divergence between the two conditional output laws. Equivalently, $I(U_i;T_i)=h(\rho)-H(U_i\mid T_i)$, where $h$ denotes binary entropy. Thus, selector information measures the
expected reduction in membership uncertainty after observing $T_i$.
If $P=Q$, then $I(U_i;T_i)=0$, that is, the output provides no membership information beyond the prior. At the opposite extreme, if $P$ and $Q$ are mutually
singular, membership can be recovered perfectly from $T_i$, and
$I(U_i;T_i)=h(\rho)$. The selector density therefore determines 
the maximum information that
an individual output can reveal. 
\end{rem}

At the dataset level, CMI and eCMI share the same uncertainty-reduction interpretation. 
Conditioning on the supersample fixes the candidate instances, so the remaining uncertainty concerns which instances were selected.
This membership-testing perspective underlies the analyses that follow.

\section{CMI View of Generalization and Hallucination}
\label{sec:cmi-gen-hall}

This section studies generalization and random-fact hallucination through the membership-testing interpretation of selector information.

\subsection{Generalization Bounds}
\label{subsec:cmi-generalization}

We revisit CMI generalization analysis, deriving individual and global bounds that illustrate how the membership information revealed by model losses controls the expected generalization gap.

Let the evaluator $q$ be a bounded loss function
$\ell:\calW\times\calZ\to[0,1]$, so that
$T_i=\ell(W,\widetilde Z_i)$. Let $K^c\defeq[N]\setminus K$, and define the empirical and held-out risks as $\hat L_n\defeq\frac{1}{n}\sum_{i\in K}\ell(W,\widetilde Z_i)$ and $\hat L_m\defeq\frac{1}{m}\sum_{i\in K^c}\ell(W,\widetilde Z_i)$, respectively. Let
$L_\mu(w)\defeq\ex{Z\sim\mu}{\ell(w,Z)}$ denote the population risk. Under the i.i.d. supersample construction, the held-out observations are independent of $W$ conditional on $K$, and hence $\ex{}{\hat L_m}=\ex{}{L_\mu(W)}$. The expected generalization error can therefore be written as $\mathcal E(\calA)\defeq\ex{\widetilde Z,K,W}{\hat L_m-\hat L_n}=\ex{}{L_\mu(W)-\hat L_n}$. Our first result bounds this gap through the selector information
associated with an individual evaluation output.

\begin{restatable}{thm}{randomCoordinateBound}\label{thm:random-coordinate}
Let $J\sim\Unif([N])$ be independent of
$(\widetilde Z,K,W)$, and define
$U\defeq\ones\{J\in K\}$ and $T\defeq T_J$. Then
\begin{equation}
\abs{\mathcal E(\calA)}
\le
\sqrt{\frac{I(U;T\mid\widetilde Z,J)}{2\rho(1-\rho)}}.
\label{eq:random-coordinate-bound}
\end{equation}
\end{restatable}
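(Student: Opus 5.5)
The plan is to express $\mathcal E(\calA)$ as a conditional expectation over the random coordinate $J$. Conditionally on $(\widetilde Z,J)=(\tilde z,j)$, the gap becomes a difference of means of $T$ under the selected and unselected laws. That difference is then bounded by Pinsker-type inequalities against the mixture $M_\rho$, so that exactly the weighted Jensen--Shannon divergence of Lemma~\ref{cmi-js} appears.

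\emph{Step 1 (rewriting the gap).} Since $|K|=n$ and $|K^c|=m$, and $J$ is uniform and independent of everything else, we have
\[
\hat L_n=\ex{J}{\tfrac{U}{\rho}T_J},\qquad \hat L_m=\ex{J}{\tfrac{1-U}{1-\rho}T_J}.
\]
Because $K$ is a uniform $n$-subset independent of $\widetilde Z$, conditionally on $(\widetilde Z,J)=(\tilde z,j)$ we have $U\sim\Bern(\rho)$. Define the conditional laws $P_{\tilde z,j}\defeq\operatorname{Law}(T\mid U=1,\tilde z,j)$ and $Q_{\tilde z,j}\defeq\operatorname{Law}(T\mid U=0,\tilde z,j)$, and set $\Delta_{\tilde z,j}\defeq \ex{Q_{\tilde z,j}}{T}-\ex{P_{\tilde z,j}}{T}$. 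Then
\[
\mathcal E(\calA)=\ex{\widetilde Z,J}{\Delta_{\widetilde Z,J}}.
\]

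\emph{Step 2 (pointwise bound).} Fix $(\tilde z,j)$ and write $M=\rho P+(1-\rho)Q$. Since $\ex{M}{T}=\rho\ex{P}{T}+(1-\rho)\ex{Q}{T}$, we get
\[
\ex{P}{T}-\ex{M}{T}=-(1-\rho)\Delta,\qquad \ex{Q}{T}-\ex{M}{T}=\rho\Delta.
\]
Because $T\in[0,1]$, Hoeffding's lemma combined with Donsker--Varadhan (equivalently, Pinsker) gives $|\ex{P}{T}-\ex{M}{T}|\le\sqrt{\kl{P}{M}/2}$, and likewise for $Q$. Hence
\[
\kl{P}{M}\ge 2(1-\rho)^2\Delta^2,\qquad \kl{Q}{M}\ge 2\rho^2\Delta^2.
\]
Weighting these by $\rho$ and $1-\rho$ and adding yields
\[
J_\rho(P,Q)\ge 2\rho(1-\rho)^2\Delta^2+2(1-\rho)\rho^2\Delta^2=2\rho(1-\rho)\Delta^2.
\]
Applying Lemma~\ref{cmi-js} conditionally on $(\tilde z,j)$ identifies $J_\rho(P,Q)$ with $I(U;T\mid \widetilde Z=\tilde z,J=j)$.

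\emph{Step 3 (averaging).} Taking absolute values and using Jensen's inequality for the concave square root,
\[
|\mathcal E(\calA)|\le\ex{}{|\Delta_{\widetilde Z,J}|}\le \ex{}{\sqrt{\tfrac{I(U;T\mid \widetilde Z,J)_{\text{ptwise}}}{2\rho(1-\rho)}}}\le\sqrt{\tfrac{I(U;T\mid\widetilde Z,J)}{2\rho(1-\rho)}},
\]
since the conditional mutual information is the average of the pointwise quantities.

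The main obstacle is obtaining the sharp constant $2\rho(1-\rho)$. A naive Donsker--Varadhan argument applied to $f(t,u)=t(u-\rho)$ under the product law loses the dependence on $\rho$, because the sub-Gaussian constant of a Bernoulli variable is not $\rho(1-\rho)$. Comparing both $P$ and $Q$ to the mixture $M_\rho$, and using the exact linear relations from Step 2, is what recovers the factor $2\rho(1-\rho)$. A smaller point to check is that Lemma~\ref{cmi-js} applies conditionally on $(\widetilde Z,J)$; this holds because $U\sim\Bern(\rho)$ remains true under that conditioning.
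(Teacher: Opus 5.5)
Your proof is correct. It shares the paper's core inequality: you compare each conditional law to the mixture $M_\rho$ via Pinsker, and the factors $(1-\rho)$ and $\rho$ from the mean differences are exactly the paper's $\TV(P,M_\rho)=(1-\rho)\TV(P,Q)$ and $\TV(Q,M_\rho)=\rho\TV(P,Q)$. The averaging over $(\widetilde Z,J)$, however, is organized differently.

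The paper works unconditionally. It shows $\mathbb E[\hat L_n]=\mathbb E[T\mid U=1]$ and $\mathbb E[\hat L_m]=\mathbb E[T\mid U=0]$, then passes $T$ through the channel that outputs $\Bern(T)$. Data processing gives $d_{\mathrm{JS},\rho}(\mathbb E[\hat L_n],\mathbb E[\hat L_m])\le I(U;T)$. It then uses $U\indp(\widetilde Z,J)$ to write $I(U;T)\le I(U;T,\widetilde Z,J)=I(U;T\mid\widetilde Z,J)$, and applies the JS--TV inequality once, to two Bernoulli laws.

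You instead disintegrate over $(\widetilde Z,J)=(\tilde z,j)$ and bound each conditional gap $|\Delta_{\tilde z,j}|$ by the pointwise quantity $I_{\tilde z,j}\coloneqq I(U;T\mid\widetilde Z=\tilde z,J=j)$. You then recover the stated bound through Jensen's inequality for the square root. This avoids both the binarization step and the independence argument comparing unconditional and conditional mutual information.

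The two routes yield different intermediate refinements, neither of which dominates the other in general. The paper gets the exact binary JS form with the unconditional $I(U;T)$, as in Theorem~\ref{thm:random-coordinate-full}; this can be smaller when mixing over coordinates washes out membership information. Your route gets $|\mathcal E(\calA)|\le\mathbb E_{\widetilde Z,J}\bigl[\sqrt{I_{\widetilde Z,J}/(2\rho(1-\rho))}\bigr]$ together with a per-coordinate bound on each $|\Delta_{\tilde z,j}|$. Your route cannot exploit sign cancellations across coordinates, since it bounds $\mathbb E|\Delta_{\widetilde Z,J}|$ rather than $|\mathbb E\Delta_{\widetilde Z,J}|$.
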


The next result instead uses the global evaluated CMI of the entire loss vector, $\eCMI^\ell_\rho(\calA)$.
\begin{restatable}{thm}{globalECMIBound}
\label{thm:global-ecmi}
Under the construction above, we have
\begin{align}
\label{eq:global-ecmi}
|\mathcal{E}(\mathcal A)|
\leq
\min\left\{
\frac{N}{m}\sqrt{\frac{\eCMI^\ell_\rho(\mathcal A)}{2n}},
\frac{N}{n}\sqrt{\frac{\eCMI^\ell_\rho(\mathcal A)}{2m}}
\right\}.
\end{align}
\end{restatable}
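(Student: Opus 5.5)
We bound the gap by a Donsker--Varadhan change of measure on the whole loss vector $T^N$, conditionally on $\widetilde Z=\tilde z$. The key identity is that the gap is a linear statistic of $(T^N,K)$ that vanishes in mean when $K$ is replaced by an independent copy.

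\textbf{Step 1: a linear statistic with zero decoupled mean.} Define
\[
g(t^N,k)\defeq\frac1m\sum_{i\notin k}t_i-\frac1n\sum_{i\in k}t_i .
\]
Then $\mathcal E(\calA)=\ex{}{g(T^N,K)}$. Let $\bar K$ be an independent copy of $K$, independent of $(\widetilde Z,W)$. For any fixed $t^N$, symmetry of the uniform selector gives
\[
\ex{\bar K}{g(t^N,\bar K)}=\frac1m\cdot\frac mN\sum_i t_i-\frac1n\cdot\frac nN\sum_i t_i=0 .
\]
Hence $\mathcal E(\calA)=\ex{}{g(T^N,K)}-\ex{}{g(T^N,\bar K)}$, where in each term $T^N$ has its conditional law given $\widetilde Z$.

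\textbf{Step 2: change of measure.} Conditionally on $\widetilde Z=\tilde z$, apply Donsker--Varadhan to the joint law $P_{T^N,K\mid\tilde z}$ against the product $P_{T^N\mid\tilde z}\otimes P_K$. Their KL divergence is $I(T^N;K\mid\widetilde Z=\tilde z)$. The resulting bound is
\[
\abs{\ex{}{g\mid\tilde z}}\le\sqrt{2\sigma^2 I(T^N;K\mid\widetilde Z=\tilde z)},
\]
provided $g(t^N,\bar K)$ is $\sigma^2$-sub-Gaussian under $\bar K$ for every fixed $t^N$. Averaging over $\tilde z$ and applying Jensen to the concave square root gives $|\mathcal E(\calA)|\le\sqrt{2\sigma^2\,\eCMI^\ell_\rho(\calA)}$.

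\textbf{Step 3: the sub-Gaussian constant (main obstacle).} Since $\sum_i t_i$ is fixed, we can write
\[
g(t^N,\bar K)=\frac{\sum_i t_i}{m}-\frac Nm\cdot\frac1n\sum_{i\in\bar K}t_i .
\]
The term $\frac1n\sum_{i\in\bar K}t_i$ is the mean of $n$ draws without replacement from values in $[0,1]$. Hoeffding's inequality for sampling without replacement, which is dominated by the with-replacement case via convex ordering, makes this mean $\frac1{4n}$-sub-Gaussian. Therefore $g$ is $\sigma^2=\frac{N^2}{m^2}\cdot\frac1{4n}$-sub-Gaussian, and Step 2 yields $\frac Nm\sqrt{\eCMI/(2n)}$.

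Symmetrically, use $\sum_{i\in\bar K}t_i=\sum_i t_i-\sum_{i\notin\bar K}t_i$ to write $g$ as an affine function of the mean of the $m$ unselected values with coefficient $\frac1m+\frac1n=\frac N{nm}$, i.e.
\[
g(t^N,\bar K)=\frac Nn\cdot\frac1m\sum_{i\notin\bar K}t_i-\frac{\sum_i t_i}{n}.
\]
The complement $\bar K^c$ is itself a uniform $m$-subset, so the same Hoeffding bound gives $\sigma^2=\frac{N^2}{n^2}\cdot\frac1{4m}$, and hence the bound $\frac Nn\sqrt{\eCMI/(2m)}$.

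Taking the minimum of the two bounds proves \eqnref{eq:global-ecmi}. The delicate point is justifying Hoeffding's bound for without-replacement sampling; I would cite Hoeffding (1963, Thm.~4) directly.
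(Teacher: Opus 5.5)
Your proposal is correct and follows the paper's proof essentially step for step. The shared ingredients are:
\begin{itemize}
\item the same linear statistic $G(t^N,k)$, which has zero mean under the decoupled law;
\item Donsker--Varadhan against $P_{T^N\mid\widetilde Z=\tilde z}\otimes P_K$, using $K\indp\widetilde Z$ to identify the KL divergence with $I(T^N;K\mid\widetilde Z=\tilde z)$;
\item the two affine rewritings of $G$ through the selected and unselected means, where Hoeffding's without-replacement comparison gives exponential-moment constants $N^2/(8nm^2)$ and $N^2/(8mn^2)$;
\item Jensen's inequality when averaging over $\widetilde Z$.
\end{itemize}
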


The two theorems quantify selector information at different levels. Theorem~\ref{thm:random-coordinate} uses an average of coordinate-wise CMI, whereas Theorem~\ref{thm:global-ecmi} uses the information that the full loss vector reveals about the selected subset. Both provide a membership-testing interpretation, namely when the losses reveal little about which instances were selected for training, the expected training and held-out risks cannot differ significantly.
Importantly, this controls the generalization gap rather than the population risk itself, and does not require the entire model to carry little information about the training sample. 

Our bounds use a fixed-cardinality selector with general density, so the information quantities need not coincide with those of the original CMI framework. Our objective is not to improve upon existing CMI bounds, but to establish the selector-information perspective that also enables our hallucination analysis. We defer a comparison with existing CMI bounds to Appendix (see Remark~\ref{rem:existing-cmi-bounds}).

\subsection{Hallucination in the Random-Fact Setting}
\label{subsec:cmi-hallucination}

We now discuss generative models and consider a random-fact setting.
For this subsection, let $\widetilde Z=(\widetilde Z_1,\ldots,\widetilde Z_N)$ be a fixed universe of $N$ distinct plausible claims, rather than an i.i.d. supersample. Draw $K\sim\operatorname{Unif}\{A\subseteq[N]:|A|=n\}$, and declare precisely the claims indexed by $K$ to be true. The training sample $S=\widetilde{Z}_K$ therefore contains all facts, while $\widetilde Z_{K^c}$ contains all nonfacts. Thus, in this closed-world abstraction, $K$ serves both as the training-set selector and as the truth selector, following the random-fact membership-testing model of~\citet{guo2026hallucination}. Since truth is assigned uniformly at random, the claim identities provide no additional structure from which membership can be inferred, and factual discrimination must instead rely on information retained about $K$.
Recall that $J\sim\Unif([N])$, $U\defeq\ones\{J\in K\}$ and $T\defeq q(W,\widetilde Z_J)$. Here, $U\sim\Bern(\rho)$ indicates whether the queried claim is true, and $T\in[0,1]$ is a factuality score, with larger values indicating greater confidence in the claim's truth.
Accordingly, $P\defeq\operatorname{Law}(T\mid U=1)$ and $Q\defeq\operatorname{Law}(T\mid U=0)$ are now score laws on facts and non-facts, respectively, rather than loss distributions on training and held-out samples. Since the claim universe is fixed, conditioning on $\widetilde Z$ is implicit throughout this subsection (i.e. $\widetilde{Z}=\tilde{z}$).

We focus first on the sparse regime $\rho=n/N\ll1$, in which true facts
constitute a small fraction of the candidate
universe~\citep{guo2026hallucination,kalai2025language,kalai2026evaluating}.
For intuition, a birthday query with $365$ candidate dates
and exactly one correct answer has truth density $1/365$ under uniform
candidate querying. This illustrates a small truth base rate. The limit
$\rho\to0$ idealizes an increasingly large set of plausible
alternatives relative to the number of facts.

The following lemma identifies the normalized selector information in this regime.

\begin{restatable}{lem}{sparseJSKLLimit}\label{lem:js-kl-limit}
Under the setting of Lemma~\ref{cmi-js}, 
as $\rho = \frac{n}{N} \to0$,
$\frac{I(U;T)}{\rho}
=
\frac{J_\rho(P,Q)}{\rho}
\longrightarrow
\kl{P}{Q}$.
\end{restatable}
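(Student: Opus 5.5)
By Lemma~\ref{cmi-js}, $I(U;T)=J_\rho(P,Q)$, so it suffices to show $J_\rho(P,Q)/\rho\to\kl{P}{Q}$ as $\rho\to0$. The statement implicitly treats $P$ and $Q$ as fixed while $\rho$ varies. If they depend on $\rho$, I will make that explicit, assume convergence in total variation, and require uniform integrability of the log-likelihood ratio; the plan below is written for fixed $P,Q$.

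\textbf{Splitting the divergence.} Write
\[
\frac{J_\rho(P,Q)}{\rho}=\kl{P}{M_\rho}+\frac{1-\rho}{\rho}\kl{Q}{M_\rho}.
\]
I treat the two terms separately.

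\textbf{First term.} Since $M_\rho\to Q$ setwise, I want $\kl{P}{M_\rho}\to\kl{P}{Q}$. Set $f=dP/dQ$ when $P\ll Q$. Then
\[
\kl{P}{M_\rho}=\int \log\frac{dP}{dM_\rho}\,dP=\int \log\frac{f}{\rho f+1-\rho}\,dP .
\]
The integrand increases monotonically to $\log f$ as $\rho\downarrow0$. Near $\rho=0$ it is bounded below by $\log\frac{f}{f+1}\ge -\log(1+1/f)$, which is $P$-integrable because $\int\log(1+1/f)\,dP\le\int (1/f)\,dP\le1$. Monotone convergence then gives the limit $\kl{P}{Q}$, including the value $+\infty$.

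If $P\not\ll Q$, let $P$ put mass $a>0$ on a $Q$-null set. On that set $dP/dM_\rho=1/\rho$, so $\kl{P}{M_\rho}\ge a\log(1/\rho)-C\to\infty=\kl{P}{Q}$. Getting this lower bound requires the log-sum inequality or lower semicontinuity of KL to control the remaining part.

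\textbf{Second term.} I show $\frac{1-\rho}{\rho}\kl{Q}{M_\rho}\to0$. Write $dM_\rho/dQ=1+\rho(f-1)$ on the part where $Q$ is absolutely continuous, and note $\int (f-1)\,dQ\le 0$. Then
\[
\kl{Q}{M_\rho}=-\int\log\bigl(1+\rho(f-1)\bigr)\,dQ .
\]
Using $-\log(1+x)\le -x+x^2/(1+x)$, the linear term contributes $-\rho\int(f-1)\,dQ=\rho\,P(\text{singular part w.r.t. }Q)\ge0$. This is a problem when $P\not\ll Q$, but in that case the first term is already infinite, so the sum still tends to $\infty$.

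When $P\ll Q$ the linear term vanishes, and
\[
\kl{Q}{M_\rho}\le\rho^2\chi^2\text{-type integral}=\rho^2\int\frac{(f-1)^2}{1+\rho(f-1)}\,dQ .
\]
Dividing by $\rho$ gives
\[
\rho\int\frac{(f-1)^2}{1+\rho(f-1)}\,dQ .
\]

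\textbf{Main obstacle.} I expect the hard step to be showing this last expression tends to $0$ without assuming $\chi^2(P\|Q)<\infty$. The idea is pointwise domination: the integrand $\rho(f-1)^2/(1-\rho+\rho f)$ is bounded by roughly $|f-1|\cdot\min\{1,\rho|f-1|\}\lesssim f+1$, which is $Q$-integrable. It also tends to $0$ pointwise, so dominated convergence finishes the argument.

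Combining both terms yields the lemma.
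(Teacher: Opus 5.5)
Your proposal is correct in substance but argues differently from the paper. The paper never proves the two terms converge separately. For the lower bound it drops the second term, $J_\rho(P,Q)/\rho\ge\kl{P}{M_\rho}$, and applies lower semicontinuity of KL along $M_\rho\to Q$ in total variation. This handles $P\not\ll Q$ and $\kl{P}{Q}=\infty$ in one stroke. For the upper bound (when $\kl{P}{Q}<\infty$) it uses the exact identity $\rho\kl{P}{Q}-J_\rho(P,Q)=\kl{M_\rho}{Q}\ge0$. Hence $J_\rho(P,Q)/\rho\le\kl{P}{Q}$ holds for every $\rho$, non-asymptotically. This identity removes your ``main obstacle'': one never needs $\frac{1-\rho}{\rho}\kl{Q}{M_\rho}\to0$ on its own, and $\kl{M_\rho}{Q}=o(\rho)$ comes out as a by-product. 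Your route costs more work but gives finer, termwise information. For example, it shows $\kl{Q}{M_\rho}=o(\rho)$ whenever $P\ll Q$, even if $\kl{P}{Q}=\infty$. Your domination $\rho(f-1)^2/(1-\rho+\rho f)\le|f-1|$ for $\rho\le1/2$ is valid without any $\chi^2$ assumption.

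One step needs repair. The first-term integrand $\log f-\log\bigl(1+\rho(f-1)\bigr)$ is not monotone in $\rho$ everywhere. On $\{f<1\}$, the quantity $1+\rho(f-1)$ is a convex combination of $f$ and $1$. So the integrand lies in $[\log f,0]$ and \emph{decreases} to $\log f$ as $\rho\downarrow0$; monotone convergence applies only on $\{f\ge1\}$. The fix is immediate: on $\{f<1\}$ use dominated convergence, since $\int_{\{f<1\}}|\log f|\,dP=\int_{\{f<1\}}f|\log f|\,dQ\le1/e$. Alternatively, argue globally: Fatou with your integrable lower bound $-\log(1+1/f)$ gives the $\liminf$. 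When $\kl{P}{Q}<\infty$, domination by $\max\{\log^+f,\log(1+1/f)\}$ gives the limit.
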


Lemma~\ref{lem:js-kl-limit}, similar to~\citet[Lemma~3.5]{guo2026hallucination}, 
shows that directional KL is the
limiting selector information per rare true claim, with $Q$ serving as
the nonfact background. This hints at a \emph{capacity-per-unit-fact} interpretation, namely for the stationary memoryless extension of the binary channel $U\mapsto T$, assign cost zero to $U=0$ and cost one to $U=1$.
Its capacity per unit cost is then $\kl{P}{Q}$ by the zero-cost-symbol formula~\citep{yury2023information}.

Under their membership-testing assumptions,
\cite{guo2026hallucination} establish that, for prescribed error
constraints, the minimum asymptotic memory rate per fact is the
minimum of $\kl{P}{Q}$ over admissible fact and non-fact score laws,
along sequences with $n\to\infty$ and $n/N\to0$.
This result connects the score laws'  distinguishability to the memorization required for random facts.

Generative models generally assign nonzero probability to unseen outputs, including non-facts, thereby incurring prediction error; for example, softmax assigns positive probability to every candidate token. Hallucination, however, is the special case in which a non-fact receives sufficient confidence to be treated as factual.
In the random-fact setting, facts with $U=1$ form the positive class, and non-facts with $U=0$ form the negative class. We consider a threshold-based notion of factuality: a candidate claim is accepted as factual when its score reaches a prescribed threshold. For $A_\tau=\mathbf{1}\{T\geq\tau\}$, define the fact recall and false-positive rate as $\alpha_\tau:=\mathbb{P}(A_\tau=1\mid U=1)$ and $\beta_\tau:=\mathbb{P}(A_\tau=1\mid U=0)$, respectively. 
Thus, accepting a nonfact is a false positive, and we refer to
$\beta_\tau$ as the \emph{false-positive hallucination rate}.
A natural question is whether a suitable threshold can eliminate false-positive hallucinations while retaining factual claims.
We characterize their relationship in the sparse-fact regime below.

\begin{restatable}{thm}{FPRrandomfact}
\label{thm:fpr-random-fact}
In the random-fact setting, suppose that $\frac{1}{n}I(T^N;K)\le R$.
Then in the sparse limit $\rho\to0$ and $n \to \infty$,
 for every $\tau$ s.t. $\alpha_\tau>0$, we have
$\beta_\tau
 \geq
 \exp\left(-\frac{R+h(\alpha_\tau)}{\alpha_\tau}\right)$.
In particular, perfect recall $\alpha_\tau=1$ implies $\beta_\tau\geq e^{-R}$.
\end{restatable}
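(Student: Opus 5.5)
The plan is to turn the global budget $I(T^N;K)\le nR$ into a per-coordinate bound on $I(U;T)$. We then apply data processing to the thresholded decision $A_\tau$ and read off a lower bound on $\beta_\tau$ from an elementary binary-divergence inequality. Throughout, $\widetilde Z=\tilde z$ is fixed, $\alpha_\tau,\beta_\tau$ are the averaged quantities defined through the random coordinate $J$, and the limit statement is understood along sequences with $\rho\to0$ and $n\to\infty$, with the conclusion holding in the limit inferior sense.

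\emph{Step 1 (from global to coordinate-wise information).} Write $I(T^N;K)=H(K)-H(K\mid T^N)$. Since $K$ is determined by $(U_1,\ldots,U_N)$, subadditivity and conditioning reduce entropy, giving $H(K\mid T^N)\le\sum_i H(U_i\mid T^N)\le\sum_i H(U_i\mid T_i)$. Since $H(K)=\log\binom{N}{n}$ and $\sum_i H(U_i)=N h(\rho)$, this yields
\begin{equation*}
\sum_{i=1}^N I(U_i;T_i)\le I(T^N;K)+\Big(N h(\rho)-\log\tbinom{N}{n}\Big).
\end{equation*}
By Stirling's formula the bracket is $O(\log n)$; indeed it is about $\tfrac12\log(2\pi n(1-\rho))$. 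Hence $N\, I(U;T\mid J)\le nR+O(\log n)$, that is, $I(U;T\mid J)/\rho\le R+o(1)$ as $n\to\infty$. Because $J$ is independent of $U$ (each $U_i\sim\Bern(\rho)$), we have $I(U;T)\le I(U;T,J)=I(U;T\mid J)$. Therefore $I(U;T)\le\rho(R+o(1))$.

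\emph{Step 2 (data processing to the threshold test).} The decision $A_\tau=\mathbf 1\{T\ge\tau\}$ is a function of $T$, so by Lemma~\ref{cmi-js} applied to $A_\tau$,
\begin{equation*}
I(U;T)\ge I(U;A_\tau)=J_\rho\big(\Bern(\alpha_\tau),\Bern(\beta_\tau)\big)\ge\rho\,\kl{\Bern(\alpha_\tau)}{\Bern(m_\rho)},
\end{equation*}
where $m_\rho=\rho\alpha_\tau+(1-\rho)\beta_\tau$. Next, expand the binary KL divergence as $-h(\alpha_\tau)-\alpha_\tau\log m_\rho-(1-\alpha_\tau)\log(1-m_\rho)$ and drop the last, nonnegative, term. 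This gives $\kl{\Bern(\alpha_\tau)}{\Bern(m_\rho)}\ge-h(\alpha_\tau)-\alpha_\tau\log m_\rho$.

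\emph{Step 3 (conclusion and main obstacle).} Combining the two steps and dividing by $\rho$ gives $-h(\alpha_\tau)-\alpha_\tau\log(\beta_\tau+\rho\alpha_\tau)\le R+o(1)$. Since $\alpha_\tau>0$, this rearranges to
\begin{equation*}
\beta_\tau+\rho\alpha_\tau\ge\exp\!\Big(-\frac{R+h(\alpha_\tau)+o(1)}{\alpha_\tau}\Big).
\end{equation*}
Letting $\rho\to0$ removes the term $\rho\alpha_\tau$ and the $o(1)$, which yields the claim; the case $\alpha_\tau=1$ gives $h(1)=0$ and hence $\beta_\tau\ge e^{-R}$.

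The main obstacle I expect is Step 1. The coordinates $U_i$ are dependent under the fixed-size selector, so one must check that the entropy gap $N h(\rho)-\log\binom Nn$ is $o(n)$ uniformly as $\rho\to0$. It is only logarithmic, but this should be verified carefully when $N\to\infty$ much faster than $n$. A second point to handle is that $P$, $Q$, $\alpha_\tau$ and $\beta_\tau$ may themselves change along the sequence. This is why I would avoid invoking Lemma~\ref{lem:js-kl-limit}, which treats $P,Q$ as fixed, and rely instead on the nonasymptotic inequality of Step 2.
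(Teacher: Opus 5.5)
Your proposal is correct and follows the paper's proof essentially step for step. It uses the same subadditivity bound $H(K\mid T^N)\le\sum_i H(U_i\mid T_i)$ with the Stirling correction, the same data processing to $A_\tau$, and the same dropping of the $(1-\rho)$-weighted KL term; the only difference is that the paper passes to the limit via lower semicontinuity of KL to get $\kl{\Bern(\alpha_\tau)}{\Bern(\beta_\tau)}\le R$ before expanding, whereas you expand at finite $\rho$ and use $m_\rho\le\beta_\tau+\rho\alpha_\tau$, which gives an explicit finite-$\rho$ bound before the limit.
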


Theorem~\ref{thm:fpr-random-fact} identifies an information-limited
recall--hallucination trade-off, illustrated in
Figure~\ref{fpr-recall-bound} in Appendix~\ref{proofs}. Under a uniformly bounded score-level
information budget per fact, recall bounded away from zero prevents the
hallucination rate from vanishing in the sparse limit.
Changing the threshold can move the operating point, but cannot bypass
this constraint for the same score array. In addition, in the random-fact setting, facts contain no reusable structure, so factual discrimination reduces to memorizing selector $K$. Since $I(T^N;K)\leq I(W;K)\leq H(W)$, finite storage therefore provides one sufficient source of the information bottleneck. As a by-product of our analysis, we also study the relationship between calibration and hallucination in Appendix~\ref{sec:appendix-calibration}.

\begin{rem}[Revisiting Lemma~\ref{cmi-js}: two roles of selector information]
\label{rem:selector-gen-hall}
Lemma~\ref{cmi-js} gives both (generalization and hallucination) analyses a membership-testing interpretation through $I(U;T)=J_\rho(P,Q)$.
Small selector information in \emph{losses} controls the generalization gap, whereas high fact recall with few false positives requires sufficient selector information in \emph{factuality scores}. Indeed, $P=Q$ implies $\alpha_\tau=\beta_\tau$ for every threshold. 
In the random-fact setting, hallucination can be viewed as \emph{unwanted generalization of factual confidence} from observed facts to unseen nonfacts.
\end{rem}

The random-fact abstraction isolates the role of memorization.
Real knowledge, however, also contains reusable structure that supports
inference beyond observed examples.
In the next section, we introduce a rules-and-facts setting to study
how rule learning and atomic-fact memory jointly constrain factual
discrimination under a shared information budget.

\section{Rules-and-Facts Setting}
\label{sec:rules-and-facts}

We now extend the random-fact setting to a rules-and-facts model in which atomic facts and reusable rules jointly generate the factual world. This construction distinguishes rule learning from atomic-fact memorization at the level of the generative model and its information-theoretic analysis.

Formally, let $\mathscr B=(b_1,\ldots,b_{N_1})$ be a fixed pool of $N_1$ different candidate atomic claims. An atomic fact selector $K_1\sim\operatorname{Unif}\{A\subseteq[N_1]:|A|=n_1\}$ determines the true atomic facts $B\coloneqq\mathscr B_{K_1}$, with $\rho_1\coloneqq \frac{n_1}{N_1}$. Let $\Theta\sim P_\Theta$ represent the reusable rule structure, independently of $K_1$.
Given $(B,\Theta)$, a deterministic mapping $\Gamma$ generates the full factual supersample
$\widetilde Z=\Gamma(B,\Theta)=(\widetilde Z_1,\ldots,\widetilde Z_N)$, which contains all true atomic and composite facts within the modeled universe. 
Unlike the earlier i.i.d. setting,  the entries $\widetilde Z_i$ need not be independent, since they are generated from the shared latent variables $(B,\Theta)$. Independently, let $K\sim\operatorname{Unif}\{A\subseteq[N]:|A|=n\}$ select the training set $S\coloneqq\widetilde Z_K$, with sampling density $\rho\coloneqq \frac{n}{N}$. The learner then outputs $W=\mathcal A(S)$.
Thus, $(K_1,\Theta)$ determines which claims are facts, whereas $K$ determines  which facts are observed during training. 

The mutual information (MI) $I(W;S)$
is a central quantity in information-theoretic generalization analysis~\citep{xu2017information}. This quantity and its conditional variants have also been used to quantify information retention and memorization in learning~\citep{bassily2018learners,brown2021memorization}. More recently, related MI measures have been used to formalize the effective capacity of LLMs~\citep{gu2026data}. 
In our setting, $I(W;S)$ quantifies how much information the learned model retains, on average, about the realized training set and serves as a data-dependent information budget. 
The following lemma decomposes this budget into three components.

\begin{restatable}{lem}{threeMIdecomposition}\label{3MI}
In the rules-and-facts setting, we have
\begin{align}
\label{eq:rules-facts-decomposition}
I(W;S)
={}&
\underbrace{I(W;\Theta)}_{\text{Learning}}
+
\underbrace{I(W;B\mid\Theta)}_{\text{Atomic fact memory}}
+
\underbrace{I(W;K\mid\widetilde Z)}_{\text{Excess memorization}}.
\end{align}
\end{restatable}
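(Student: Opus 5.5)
The plan is to apply the chain rule for mutual information to the joint variable $(\Theta,B,K)$, and to use the fact that this triple and $S$ determine each other up to information-theoretic equivalence given the structure of the setting. First we observe the Markov chain $(\Theta,B,K)\to S\to W$: since $W=\mathcal A(S)$ and the learner's randomness is independent of everything else, $W$ is conditionally independent of $(\Theta,B,K,\widetilde Z)$ given $S$. Hence $I(W;S)=I(W;S,\Theta,B,K)$ as long as $S$ is determined by $(\Theta,B,K)$. Indeed, $\widetilde Z=\Gamma(B,\Theta)$ and $S=\widetilde Z_K$. Combining these gives $I(W;S)=I(W;S,\Theta,B,K)\ge I(W;\Theta,B,K)\ge I(W;S)$, so that $I(W;S)=I(W;\Theta,B,K)$.

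Next we expand this quantity by the chain rule:
\begin{align*}
I(W;\Theta,B,K)=I(W;\Theta)+I(W;B\mid\Theta)+I(W;K\mid\Theta,B).
\end{align*}
It then remains to identify $I(W;K\mid\Theta,B)$ with $I(W;K\mid\widetilde Z)$. Since $\widetilde Z$ is a function of $(B,\Theta)$, we have $I(W;K\mid\Theta,B)=I(W;K\mid \Theta,B,\widetilde Z)$. We then need to remove the extra conditioning on $(\Theta,B)$ beyond $\widetilde Z$. For this, write $I(W;K\mid\widetilde Z,\Theta,B)$ in terms of $I(W;K,\Theta,B\mid\widetilde Z)$ minus $I(W;\Theta,B\mid\widetilde Z)$, and use that $K$ is independent of $(B,\Theta)$ and therefore of $(B,\Theta,\widetilde Z)$.

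The key claim is the Markov chain $(\Theta,B)\to(\widetilde Z,K)\to W$. It holds because, given $(\widetilde Z,K)$, the sample $S=\widetilde Z_K$ is fixed and $W$ depends only on $S$. Consequently $I(W;\Theta,B\mid\widetilde Z,K)=0$. Applying the chain rule in two orders,
\begin{align*}
I(W;K,\Theta,B\mid\widetilde Z)=I(W;K\mid\widetilde Z)+I(W;\Theta,B\mid\widetilde Z,K)=I(W;\Theta,B\mid\widetilde Z)+I(W;K\mid\widetilde Z,\Theta,B).
\end{align*}
The first expansion reduces to $I(W;K\mid\widetilde Z)$. So we need $I(W;\Theta,B\mid\widetilde Z)=0$, and this is the main obstacle. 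It is not automatic in general, because $\widetilde Z$ may fail to determine $(B,\Theta)$ (for example, several rules may generate the same factual world).

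To handle this, I would argue that $I(W;\Theta,B\mid\widetilde Z)\le I(K,\widetilde Z;\Theta,B\mid\widetilde Z)=I(K;\Theta,B\mid\widetilde Z)$ by data processing, since $W$ is a function of $(\widetilde Z,K)$ plus independent randomness. The right-hand side vanishes because $K$ is independent of $(\Theta,B)$, hence independent of $(\Theta,B,\widetilde Z)$ jointly. This closes the argument: $I(W;K\mid\Theta,B)=I(W;K\mid\widetilde Z)$. Substituting this into the chain-rule expansion gives \eqref{eq:rules-facts-decomposition}. Finiteness of the terms holds under the standing discreteness of the claim pools, and general $\Theta$ can be handled by standard conditional MI conventions.
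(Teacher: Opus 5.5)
Your proposal is correct and follows the paper's proof: the same identification $I(W;S)=I(W;\Theta,B,K)$, the same chain-rule expansion, and then $I(W;K\mid\Theta,B)=I(W;K\mid\widetilde Z)$. Your two facts $I(W;\Theta,B\mid\widetilde Z,K)=0$ and $I(K;\Theta,B\mid\widetilde Z)=0$ together amount to the paper's one-line claim $(W,K)\indp(B,\Theta)\mid\widetilde Z$, so your two-order chain-rule argument just makes that step explicit.
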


\begin{rem}
\label{rem:all-atomic-observed}
Here, \(\widetilde Z\) contains all atomic and composite facts, and
\(S=\widetilde Z_K\) contains only the selected observations.
However, \(S\) need not contain all atomic facts, so complete atomic-fact
recovery is not guaranteed even with sufficient model capacity.
To guarantee direct access to every atomic fact, one can instead let $\widetilde{Z}=\Gamma(B,\Theta)$ contain only composite facts and define
\(S=(B,\widetilde Z_K)\), where \(K\) only  selects composite observations
for rule learning.
The exact decomposition then has the third term
\(I(W;K\mid B,\Theta)\ge I(W;K\mid\widetilde Z)\), giving
$
I(W;S)
\ge
I(W;\Theta)
+
I(W;B\mid\Theta)
+
I(W;K\mid\widetilde Z)$.
Note that the joint information-budget bound (i.e. Theorem~\ref{thm:joint-information-budget}) below also applies to this variant under the corresponding information budget, with the generalization gap evaluated over the selected and held-out composite observations.
\end{rem}
The decomposition separates three roles of model information. The first term, $I(W;\Theta)$, measures the reusable rule information retained by the model. Learning $\Theta$ from $\widetilde Z_K$ enables inference on unseen $\widetilde Z_i$ and compresses multiple related facts into shared structure.
The second term, $I(W;B\mid\Theta)$, measures atomic fact information not explained by the rules. In other words, to capture all facts, the model must store these atomic exceptions and compose them with the rules.
The third term, $I(W;K\mid\widetilde Z)$, is selector-density CMI: conditioned on the full factual supersample, it is a record not of truth, but of which facts the model happened to observe.
As above, it controls the generalization gap, and larger values may indicate sample-specific overfitting, i.e., undesirable memorization.

The first two terms jointly form the world information \(I(W;B,\Theta)\): rules compactly encode transferable structure, and applying them obtains new conclusions rather than new independent information about the world. Learning such structure, however, competes with memorizing arbitrary facts for finite capacity, echoing the rule--memory competition discussed in~\citet[Appendix~A.3]{guo2026hallucination}. Our further decomposition separates factual content from training selection, i.e. \(I(W;B,K\mid\Theta)=I(W;B\mid\Theta)+I(W;K\mid \widetilde Z)\), indicating that selection-specific memory can be discarded to promote effective learning and generalization.
This separation is closely related to the ``average excess memorization'' studied by
\cite{feldman2025trade}, who measure training data information conditional on a latent parameter drawn from a meta-distribution and specifying the data-generating distribution.
Another related distinction underlies the notion of ``unintended memorization'' in \citet{morris2026how}, who separate dataset-specific information from information about the data-generating process.
In our setting, the corresponding conditioning object is the full latent world $(B,\Theta)$. Thus, atomic-fact information belongs to the world-information
component, whereas selector information captures residual sample dependence after that world is fixed.

\begin{restatable}{thm}{jointInformationBudget}
\label{thm:joint-information-budget}
Suppose that $I(W;S)\le C$ and that
$\Theta\sim\operatorname{Unif}([N_\Theta])$, where $N_\Theta\ge2$. Assume that there exists a decoder $\widehat\Theta(W)$ satisfying $\mathbb P(\widehat\Theta\neq\Theta)\le\varepsilon_\Theta$. Define
\begin{align*}
C_{\mathrm{rule}}
\coloneqq
\log N_\Theta
-h(\varepsilon_\Theta)
-\varepsilon_\Theta\log(N_\Theta-1), \quad\text{ and }
C_{\mathrm{gen}}
\coloneqq
\frac{2nm\max\{n,m\}}{N^2}
|\mathcal E(\mathcal A)|^2. 
\end{align*}
Then $I(W;\Theta)\ge C_{\mathrm{rule}}$ and
$I(W;K\mid\widetilde Z)\ge C_{\mathrm{gen}}$.

Moreover, for a threshold decision on atomic claims, let $\alpha_\tau$ denote fact recall and $\beta_\tau$ the false-positive hallucination rate. In the sparse atomic-fact limit $\rho_1=\frac{n_1}{N_1}\to0$, perfect atomic recall, $\alpha_\tau=1$, implies
\[
\beta_\tau
\ge
\exp\left(
-\frac{C-C_{\mathrm{rule}}-C_{\mathrm{gen}}}{n_1}
\right).
\]
\end{restatable}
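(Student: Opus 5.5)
The proof has three parts, all resting on the decomposition of Lemma~\ref{3MI}: a Fano bound for the rule term, a CMI generalization bound for the selector term, and the random-fact argument of Theorem~\ref{thm:fpr-random-fact} for the atomic-memory term that remains.

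\textbf{Rule term.} Since $\Theta\sim\Unif([N_\Theta])$, we have $H(\Theta)=\log N_\Theta$. Fano's inequality applied to the decoder $\widehat\Theta(W)$ gives
\[
H(\Theta\mid W)\le H(\Theta\mid\widehat\Theta)\le h(\varepsilon_\Theta)+\varepsilon_\Theta\log(N_\Theta-1).
\]
Here we use data processing, together with monotonicity of the Fano bound in the error probability for $\varepsilon_\Theta\le 1-1/N_\Theta$; if $\varepsilon_\Theta$ is larger the bound is nonpositive and the claim is trivial. Therefore $I(W;\Theta)\ge C_{\mathrm{rule}}$.

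\textbf{Selector term.} For the excess-memorization term, I would invert Theorem~\ref{thm:global-ecmi}. That theorem gives $|\mathcal E|^2\le \frac{N^2}{m^2}\frac{\eCMI}{2n}$ and $|\mathcal E|^2\le\frac{N^2}{n^2}\frac{\eCMI}{2m}$. Taking the stronger of the two resulting lower bounds gives $\eCMI\ge \frac{2nm\max\{n,m\}}{N^2}|\mathcal E|^2$, and $I(W;K\mid\widetilde Z)\ge\eCMI$ by data processing. The caveat is that Theorem~\ref{thm:global-ecmi} was proved for an i.i.d.\ supersample, while here $\widetilde Z$ is generated from $(B,\Theta)$. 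I would check that its proof only uses that, conditionally on $\widetilde Z$, $K$ is a uniform fixed-size subset independent of $\widetilde Z$, with $\mathcal E$ defined as the expected held-out minus training loss over $\widetilde Z$. That holds here, because $K$ is drawn independently of $(B,\Theta)$.

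\textbf{Hallucination bound.} Combining Lemma~\ref{3MI} with the two bounds gives
\[
I(W;B\mid\Theta)\le C-C_{\mathrm{rule}}-C_{\mathrm{gen}}.
\]
Conditional on $\Theta=\theta$, the atomic selector $K_1$ is still uniform of size $n_1$ in $[N_1]$, because $K_1$ and $\Theta$ are independent. Also, $B=\mathscr B_{K_1}$ is a bijective function of $K_1$ given the fixed pool, so $I(W;B\mid\Theta)=I(W;K_1\mid\Theta)$. Define atomic scores $T^{N_1}=(q(W,b_j))_{j}$. Data processing then gives $I(T^{N_1};K_1\mid\Theta)\le I(W;K_1\mid\Theta)$. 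For each fixed $\theta$ this is exactly the random-fact setting, with $K_1$ playing the role of $K$. So I would rerun the argument of Theorem~\ref{thm:fpr-random-fact} conditionally on $\theta$, with per-fact budget $R_\theta=I(T^{N_1};K_1\mid\Theta=\theta)/n_1$. With $\alpha_\tau=1$ it gives $\beta_{\tau,\theta}\ge e^{-R_\theta}$ in the limit.

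\textbf{Averaging over $\theta$.} The overall false-positive rate averages over $\theta$. This is valid because non-facts have probability $1-\rho_1$ regardless of $\theta$, so the conditional weights are uniform. By convexity of $\exp(-x)$ (Jensen),
\[
\beta_\tau=\mathbb E_\theta\,\beta_{\tau,\theta}\ge \exp(-\mathbb E_\theta R_\theta)\ge \exp\!\left(-\frac{C-C_{\mathrm{rule}}-C_{\mathrm{gen}}}{n_1}\right).
\]

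\textbf{Main obstacle.} The hardest step is this last one: making the sparse-limit argument of Theorem~\ref{thm:fpr-random-fact} hold conditionally and uniformly in $\theta$, so that Jensen can be applied to the limiting bounds. I would first try to extract from that proof a non-asymptotic form, in which $\rho_1\,\kl{P_\theta}{Q_\theta}$-type quantities bound $I(T_J;U\mid J,\theta)$ up to $o(\rho_1)$ corrections. I would then use subadditivity, $I(T^{N_1};K_1)\ge\sum_j I(T_j;U_j)$ (valid since the $U_j$ are nearly independent, or via the chain rule), together with the data-processing bound $\kl{P}{Q}\ge \log(1/\beta_\tau)$ at $\alpha_\tau=1$. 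Both of these steps pass through the average over $\theta$.
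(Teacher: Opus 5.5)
Your first two steps match the paper: Fano gives the bound on $I(W;\Theta)$, and inverting Theorem~\ref{thm:global-ecmi} plus data processing gives the bound on $I(W;K\mid\widetilde Z)$. The use of Lemma~\ref{3MI} and of $I(W;B\mid\Theta)=I(W;K_1\mid\Theta)$ also matches. Your check that the proof of Theorem~\ref{thm:global-ecmi} only needs $K$ uniform and independent of $\widetilde Z$ is right.

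The gap is in the atomic step. The inequality $I(T^{N_1};K_1)\ge\sum_j I(T_j;U_j)$ is false for a fixed-cardinality selector. If $T_j=U_j$, the left side is $\log\binom{N_1}{n_1}$ and the right side is $N_1h(\rho_1)$, which is strictly larger. Neither the chain rule nor near-independence repairs this. What the proof of Theorem~\ref{thm:fpr-random-fact} actually gives is the following, and it can be run directly conditionally on $\Theta$, so uniformity in $\theta$ is not the real obstacle:
\[
\sum_{j}I(U_j;T_j\mid\Theta)\le I(T^{N_1};K_1\mid\Theta)+N_1h(\rho_1)-\log\binom{N_1}{n_1}.
\]
Carrying this through your argument yields $\rho_1+(1-\rho_1)\beta_\tau\ge\exp\bigl(-(C-C_{\mathrm{rule}}-C_{\mathrm{gen}})/n_1-c\bigr)$, where $c=\bigl(N_1h(\rho_1)-\log\binom{N_1}{n_1}\bigr)/n_1$ is roughly $\log(2\pi n_1)/(2n_1)$. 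This extra term vanishes only if in addition $n_1\to\infty$. That is a hypothesis of Theorem~\ref{thm:fpr-random-fact}, but not of the present statement. For fixed $n_1$ and $N_1\to\infty$ the term does not vanish; for $n_1=1$, $c\to1$. So you would only obtain the stated bound up to a constant factor.

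The missing idea is to use perfect recall as a set inclusion rather than coordinate by coordinate. Let $\widehat K_1$ be the accepted set of atomic claims, which is a function of $W$, and let $L=|\widehat K_1|$. Then $\alpha_\tau=1$ forces $K_1\subseteq\widehat K_1$ almost surely. Hence $H(K_1\mid W,\Theta)\le\mathbb E\log\binom{L}{n_1}$, while $H(K_1\mid\Theta)=\log\binom{N_1}{n_1}$. Three further facts finish the step: $\binom{N_1}{n_1}/\binom{L}{n_1}\ge(N_1/L)^{n_1}$, Jensen's inequality, and $\mathbb E L=n_1+(N_1-n_1)\beta_\tau$. Together they give $I(W;B\mid\Theta)\ge n_1\log\bigl(1/(\rho_1+(1-\rho_1)\beta_\tau)\bigr)$ with no correction term, for every $n_1$ and $N_1$. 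No per-$\theta$ limits or averaging over $\theta$ are needed, and the stated bound follows by letting $\rho_1\to0$.

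Separately, your remark that the Fano claim is trivial when $\varepsilon_\Theta>1-1/N_\Theta$ is incorrect. $C_{\mathrm{rule}}$ is positive again there; it equals $\log\frac{N_\Theta}{N_\Theta-1}$ at $\varepsilon_\Theta=1$. So the first claim genuinely needs $\varepsilon_\Theta\le1-1/N_\Theta$, which the paper also leaves implicit.
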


Theorem~\ref{thm:joint-information-budget} places rule recovery, sample dependence, and hallucination suppression under the total information budget $C$, unlike Theorem~\ref{thm:fpr-random-fact}, which assumes a score-level information budget $R$ per fact.
At perfect atomic recall, its sparse-limit constraint can be written as
$
C_{\mathrm{rule}}+C_{\mathrm{gen}}+n_1\log\frac{1}{\beta_\tau}\le C.
$
Suppressing false positives to rate \(\beta_\tau\) therefore requires at least \(\log(1/\beta_\tau)\) nats per atomic fact on average. The remaining terms $C_{\mathrm{rule}}$ and $C_{\mathrm{gen}}$ quantify the minimum information committed to rule recovery and certified by the generalization gap, respectively. At fixed \(C\), these commitments reduce the budget available for atomic factual discrimination. 
This perspective also motivates retrieval-augmented generation (RAG)~\citep{lewis2020retrieval}. As also discussed by \citet{guo2026hallucination}, external non-parametric memory can alleviate the need to encode all factual knowledge in model parameters.

\subsection{Data Mixing and the Cost of Factual Recall}
\label{sec:gaussian-data-mixing}

Theorem~\ref{thm:joint-information-budget} identifies a competition between reusable rule information and atomic-fact memory under a shared information budget. It leaves open an operational question: \emph{how does the training mixture affect the factuality behavior of a particular learner?} 
We study an additive Gaussian setting with a restricted generative model that represents both fact types through a single location parameter, allowing explicit characterization of fact recall and false-positive rates.

\begin{figure}[H]
    \centering    \includegraphics[width=0.95\linewidth]{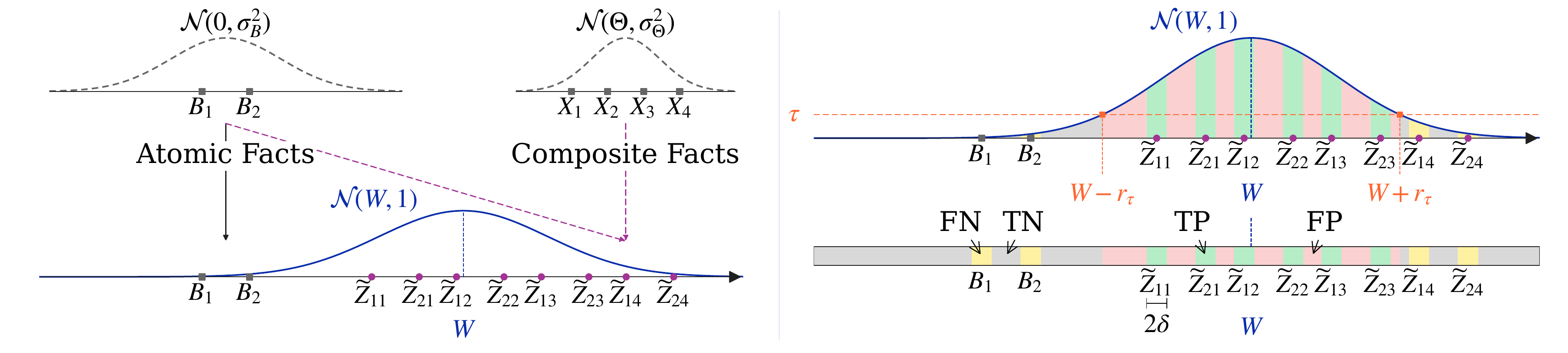}
    
    \caption{Left: atomic and composite fact generation and fitting with $\mathcal N(W,1)$. Right: factuality tolerance and the acceptance threshold jointly define classification outcomes.}
    \label{fig:gaussian_facts}
    
\end{figure}

\textbf{Atomic facts and rule-derived facts.} Let atomic facts be $B=(B_1,\ldots,B_{n_1})$ with $B_i\overset{\mathrm{i.i.d.}}{\sim}\mathcal N(0,\sigma_B^2)$, where $\sigma_B>0$. Draw $\Theta\sim P_\Theta$ on $[0,\infty)$ and, conditional on $\Theta$, generate additive rule instances
$X=(X_1,\ldots,X_{n_2})$ with $X_j\mid\Theta\overset{\mathrm{i.i.d.}}{\sim}\mathcal N(\Theta,\sigma_\Theta^2)$, where $\sigma_\Theta^2\in(0,\infty)$ is a prescribed function of $\Theta$. We assume $B\indp\Theta$.
Each rule instance is shared across atomic facts, producing $\widetilde Z\coloneqq\Gamma(B,X)\in\mathbb R^{n_1\times n_2}$ with $\widetilde Z_{ij}\coloneqq B_i+X_j$.
Thus the composite pool contains $N=n_1n_2$ facts. 
Independently of $(\Theta,B,X)$, draw $K$ uniformly over the size-$n$ subsets of $[n_1]\times[n_2]$.
As in Remark~\ref{rem:all-atomic-observed}, training uses all atomic facts and the selected composite facts, namely $S\coloneqq(B,\widetilde Z_K)$ with $|S|=n_1+n$, where $K_{ij}\coloneqq\mathbf 1\{(i,j)\in K\}$.
Define the composite proportion in the training sample by $a_N\coloneqq\frac{n}{n_1+n}$. 

\textbf{Shared generative representation.} Let $\phi$ and $\Phi$ denote the standard Gaussian density and distribution functions. 
We fit the unit-variance Gaussian density $p_w(z)=\phi(z-w)$ by minimizing the average negative log-likelihood.  In fact, this is a simple analogue of LLM pretraining, which learns a distribution over sequences.
Up to an additive constant, the per-example loss is $\ell(w,z)=\tfrac12(z-w)^2$. The empirical objective and its minimizer are therefore
\begin{align*}
\widehat L_S(w)
\coloneqq
\frac{\sum_i\ell(w,B_i)+\sum_{i,j}K_{ij}\ell(w,B_i+X_j)}{n_1+n}, \ \
\widehat W_N
\coloneqq
\frac{\sum_iB_i+\sum_{i,j}K_{ij}(B_i+X_j)}{n_1+n}.
\end{align*}
Let $W_N$ denote the learned parameter, allowing for an approximate
rather than exact minimizer.

\textbf{Recall and hallucination rate.} Let $q(W_N,z)=p_{W_N}(z)$, namely using the likelihood $p_{W_N}(z)$ as a factuality score. For $r\geq0$, set $\tau(r)=\phi(r)$, so that
$A_{\tau(r)}(z)=\mathbf 1\{p_{W_N}(z)\geq\tau(r)\}=\mathbf 1\{|z-W_N|\leq r\}$. Positive queries are sampled uniformly from the atomic or composite
fact pool. Their conditional acceptance probabilities are
\begin{align*}
    \alpha_N^{\mathrm{ato}}(r)
    &\coloneqq
    \frac1{n_1}
    \sum_{i=1}^{n_1}
    \mathbf 1\{|B_i-W_N|\leq r\}, \quad
    \alpha_N^{\mathrm{comp}}(r)
    \coloneqq
    \frac1N
    \sum_{i=1}^{n_1}\sum_{j=1}^{n_2}
    \mathbf 1\{|\widetilde Z_{ij}-W_N|\leq r\}.
\end{align*}
The composite recall is evaluated on the full composite pool,
including facts not selected for training.

To define nonfact queries in this continuous setting, introduce a
factuality tolerance $\delta_N>0$ and let $\mathcal T_N\coloneqq\bigcup_{i=1}^{n_1}[B_i-\delta_N,B_i+\delta_N]\;\cup\;\bigcup_{i=1}^{n_1}\bigcup_{j=1}^{n_2}[\widetilde Z_{ij}-\delta_N,\widetilde Z_{ij}+\delta_N]$.
Conditional on the realized $(B,X,W_N)$, draw a fresh candidate
$Z\sim\mathcal N(W_N,1)$ and define $\beta_N(r)\coloneqq\mathbb P\bigl(|Z-W_N|\leq r
        \,\big|\,Z\notin\mathcal T_N\bigr)$.
Thus $\beta_N(r)$ is the false-positive (hallucination) rate, i.e., the probability of accepting a model-proposed candidate conditioned on it being a nonfact.

\begin{restatable}{thm}{gaussianRecallLimit}
\label{thm:gaussian-recall}
Fix $\theta>0$ and condition on $\Theta=\theta$. Let $\sigma_{\mathrm c}\coloneqq\sqrt{\sigma_B^2+\sigma_\theta^2}$. Suppose that
$n_1,n_2,n\to\infty$, $a_N\to a\in[0,1]$,
$N\delta_N\to0$, and that $W_N-\widehat W_N\xrightarrow{\mathbb P}0$.
Then $W_N\xrightarrow{\mathbb P}a\theta$.
Moreover, for every fixed $r\geq0$,
\begin{align}
    \alpha_N^{\mathrm{ato}}(r)
    &\xrightarrow{\mathbb P}
    \alpha^{\mathrm{ato}}(r;a)
    \coloneqq
    \Phi\!\left(\frac{r-a\theta}{\sigma_B}\right)
    +
    \Phi\!\left(\frac{r+a\theta}{\sigma_B}\right)-1,
    \label{eq:gaussian-atomic-limit}\\
    \alpha_N^{\mathrm{comp}}(r)
    &\xrightarrow{\mathbb P}
    \alpha^{\mathrm{comp}}(r;a)\coloneqq
    \Phi\!\left(\frac{r-(1-a)\theta}{\sigma_{\mathrm c}}\right)
    +
    \Phi\!\left(\frac{r+(1-a)\theta}{\sigma_{\mathrm c}}\right)-1,
    \label{eq:gaussian-composite-limit}\\
    \beta_N(r)
    &\xrightarrow{\mathbb P}
    \beta(r)\coloneqq2\Phi(r)-1.
    \label{eq:gaussian-fpr-limit}
\end{align}

\end{restatable}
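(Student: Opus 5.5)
We first show $\widehat W_N\xrightarrow{\mathbb P}a\theta$, which gives $W_N\xrightarrow{\mathbb P}a\theta$ by the assumption $W_N-\widehat W_N\xrightarrow{\mathbb P}0$. Write
\[
\widehat W_N=(1-a_N)\,\frac{1}{n_1}\sum_{i}B_i\;+\;a_N\,\frac1n\sum_{i,j}K_{ij}B_i\;+\;a_N\,\frac1n\sum_{i,j}K_{ij}X_j .
\]
The first average has variance $\sigma_B^2/n_1\to0$. For the second, condition on $K$: it equals $\frac1n\sum_i c_iB_i$ with $c_i=\sum_jK_{ij}$. Its conditional variance is $\sigma_B^2\sum_ic_i^2/n^2\le\sigma_B^2 n_2/n$. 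This bound alone is not enough, since $n_2/n$ need not vanish. We therefore compute $\mathbb E\sum_ic_i^2$ under uniform selection, which is roughly $n^2/n_1+n$, so the variance is $O(1/n_1+1/n)\to0$. The third term is handled the same way with $d_j=\sum_iK_{ij}$, giving mean $\theta$ and variance $O(1/n_2+1/n)\to0$. Chebyshev's inequality then yields $\widehat W_N\to a\cdot\theta$ in probability. The intercept term $(1-a)\cdot 0$ vanishes because $\mathbb E B_i=0$.

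For the recalls, we replace $W_N$ by its limit $w^*=a\theta$. Since $|B_i-w|\le r$ is monotone in the interval endpoints, for any $\eta>0$ on the event $|W_N-w^*|\le\eta$ we can sandwich
\[
\frac1{n_1}\sum_i\mathbf 1\{|B_i-w^*|\le r-\eta\}\le\alpha_N^{\mathrm{ato}}(r)\le\frac1{n_1}\sum_i\mathbf 1\{|B_i-w^*|\le r+\eta\}.
\]
For the atomic case the $B_i$ are i.i.d., so the law of large numbers gives the limits $\mathbb P(|B-w^*|\le r\pm\eta)$. Letting $\eta\to0$ and using continuity of $\Phi$ yields \eqref{eq:gaussian-atomic-limit}.

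The composite case is the main obstacle, because the $\widetilde Z_{ij}=B_i+X_j$ are dependent. We would use a two-index (U-statistic-type) weak law. The mean of $\mathbf 1\{|B_i+X_j-w^*|\le r'\}$ is the probability that $\mathcal N(\theta,\sigma_c^2)$ lies within $r'$ of $w^*$; note that $\theta-a\theta=(1-a)\theta$. The variance of the double average is bounded by a sum over pairs sharing an index, which is $O(1/n_1+1/n_2+1/N)\to0$. Chebyshev then gives \eqref{eq:gaussian-composite-limit}, and the same sandwich argument removes the dependence on $W_N$.

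For $\beta_N$, write $Z=W_N+G$ with $G\sim\mathcal N(0,1)$ independent of everything else. Then
\[
\beta_N(r)=\frac{\mathbb P(|G|\le r,\ Z\notin\mathcal T_N)}{\mathbb P(Z\notin\mathcal T_N)}.
\]
The set $\mathcal T_N$ is a union of $n_1+N$ intervals of length $2\delta_N$. Since the density of $Z$ is at most $1/\sqrt{2\pi}$, we get $\mathbb P(Z\in\mathcal T_N\mid\cdot)\le 2(n_1+N)\delta_N/\sqrt{2\pi}\to0$ deterministically, using $N\delta_N\to0$ and $n_1\le N$. Hence both the numerator and the denominator differ from $2\Phi(r)-1$ and $1$, respectively, by $o(1)$, which gives \eqref{eq:gaussian-fpr-limit}.
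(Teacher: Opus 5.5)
Your proof is correct under the natural reading of the model, and it takes a different route from the paper's in the first two parts. For $\widehat W_N$, you condition on $K$ and control $\frac1n\sum_i c_iB_i$ and $\frac1n\sum_j d_jX_j$ through hypergeometric second moments of the row and column counts. The paper instead conditions on $(B,X)$, treats $\overline Z_{K,N}$ as a simple random sample without replacement from the Cartesian pool, and uses the fact that the pool's empirical variance is the sum of the two marginal empirical variances. Both arguments are elementary, and neither uses any joint structure of $B$ and $X$.

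For the recalls, you prove a weak law at the fixed center $a\theta$ and transfer it to the random center $W_N$ by sandwiching the radius between $r-\eta$ and $r+\eta$. The paper proves uniform (Glivenko--Cantelli) convergence of the empirical distribution functions and bounds the error at $W_N$ pathwise by $2\sup_x|F_N(x)-F(x)|$. Your sandwich avoids uniformity entirely, which is a genuine simplification. Your false-positive step is the same as the paper's.

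The one substantive difference is the composite pool. You claim that $\mathbf 1\{|B_i+X_j-w^*|\le r'\}$ has mean $\mathbb P(|\mathcal N(\theta,\sigma_{\mathrm c}^2)-w^*|\le r')$, and that index-disjoint pairs are uncorrelated. Both claims require $B\indp X$ given $\Theta$. The setup states only $B\indp\Theta$, so this holds only if you read ``generate $X$ conditional on $\Theta$'' as using fresh randomness.

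The paper's proof is designed to avoid this assumption, and says so explicitly. It uses $F_N^{\mathrm{comp}}(x)=\frac1{n_2}\sum_jF_N^{\mathrm{ato}}(x-X_j)$, so the Cartesian pool's empirical law is the convolution of the two marginal empirical laws. Marginal Glivenko--Cantelli for $B$ and for $X$ therefore suffices, whatever their joint law.

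You can get the same robustness without leaving your framework:
\begin{itemize}
\item Write your double average as $\frac1{n_2}\sum_j g_N(X_j)$ with $g_N(x)=\frac1{n_1}\sum_i\mathbf 1\{|B_i+x-w^*|\le r'\}$.
\item Set $g(x)\coloneqq\mathbb P(|B_1+x-w^*|\le r')$ and use $\sup_x|g_N(x)-g(x)|\le2\sup_y|F_N^{\mathrm{ato}}(y)-\Phi(y/\sigma_B)|$.
\item Apply the weak law to $\frac1{n_2}\sum_jg(X_j)$. Its mean is exactly the Gaussian convolution probability you need.
\end{itemize}
This replaces your U-statistic variance bound, and your sandwich argument then goes through unchanged.
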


Notably, at any fixed radius $r>0$, increasing the composite proportion $a$ moves the learned center $a\theta$ from the atomic population toward the composite population. Atomic recall therefore decreases, whereas composite recall increases. Importantly, the limiting false-positive rate at a fixed radius is independent of $a$, hence the mixture affects false positives only through the radius required to maintain the desired recall. This motivates choosing the mixture jointly with the acceptance threshold.

\begin{restatable}
{cor}{optimalMixtureRecall}
\label{cor:optimal-mixture-recall}
Under the conditions of Theorem~\ref{thm:gaussian-recall},
fix a target recall $\alpha_0\in(1/2,1)$. For $a\in[0,1]$, define $r_g(a)\coloneqq\inf\{r\geq0:\alpha^g(r;a)\geq\alpha_0\}$ for $g\in\{\mathrm{ato},\mathrm{comp}\}$, the smallest radius at which the limiting recall of population $g$ reaches $\alpha_0$.
Let $r_{\alpha_0}(a)\coloneqq\max\{r_{\mathrm{ato}}(a),r_{\mathrm{comp}}(a)\}$,
and let $\beta_{\alpha_0}(a)\coloneqq\beta(r_{\alpha_0}(a))$ be its limiting false-positive rate.
Then
\begin{equation}
\label{eq:gaussian-optimal-fpr-lower}
    \min_{a\in[0,1]}\beta_{\alpha_0}(a)
    \geq
    2\Phi\!\left(
        \frac{\theta}{2}
        +
        \frac{\sigma_B+\sigma_{\mathrm c}}{2}
        \Phi^{-1}(\alpha_0)
    \right)-1.
\end{equation}
\end{restatable}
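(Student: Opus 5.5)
Since $\beta(r)=2\Phi(r)-1$ is strictly increasing in $r$, the bound reduces to a uniform lower bound on the required radius: it suffices to show that for every $a\in[0,1]$,
\[
r_{\alpha_0}(a)\ge \frac{\theta}{2}+\frac{\sigma_B+\sigma_{\mathrm c}}{2}\Phi^{-1}(\alpha_0).
\]
Applying $\beta$ to both sides then gives \eqref{eq:gaussian-optimal-fpr-lower}. The plan is to lower-bound each single-population radius separately and then use $\max\{x,y\}\ge\lambda x+(1-\lambda)y$ with a convenient weight.

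\textbf{Step 1: single-population radius bounds.} For a folded Gaussian acceptance probability
\[
g(r;\mu,\sigma)=\Phi\!\left(\frac{r-\mu}{\sigma}\right)+\Phi\!\left(\frac{r+\mu}{\sigma}\right)-1,
\qquad \mu\ge0,
\]
we have
\[
g(r;\mu,\sigma)\le \Phi\!\left(\frac{r-\mu}{\sigma}\right)-\Phi\!\left(\frac{-r-\mu}{\sigma}\right)\le \Phi\!\left(\frac{r-\mu}{\sigma}\right).
\]
Hence $g(r;\mu,\sigma)\ge\alpha_0$ forces $r\ge \mu+\sigma\Phi^{-1}(\alpha_0)$. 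The limits in Theorem~\ref{thm:gaussian-recall} are continuous and nondecreasing in $r$, so the infimum defining $r_g(a)$ is attained, and the constraint applies at $r_g(a)$ itself. With $\mu=a\theta$, $\sigma=\sigma_B$ for atomic facts and $\mu=(1-a)\theta$, $\sigma=\sigma_{\mathrm c}$ for composite facts, this yields
\[
r_{\mathrm{ato}}(a)\ge a\theta+\sigma_B z,\qquad r_{\mathrm{comp}}(a)\ge (1-a)\theta+\sigma_{\mathrm c} z,\qquad z:=\Phi^{-1}(\alpha_0)>0.
\]
Here $z>0$ because $\alpha_0>1/2$. This also guarantees that the targets are nonvacuous, since at $r=0$ both recalls are $0<\alpha_0$.

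\textbf{Step 2: average the two bounds.} Using $\max\{x,y\}\ge (x+y)/2$,
\[
r_{\alpha_0}(a)\ge \tfrac12\bigl(a\theta+(1-a)\theta\bigr)+\tfrac12(\sigma_B+\sigma_{\mathrm c})z=\frac{\theta}{2}+\frac{\sigma_B+\sigma_{\mathrm c}}{2}z.
\]
The dependence on $a$ cancels exactly. Monotonicity of $\beta$ then gives the claim for every $a$, and so for the minimum over $a$, which exists by continuity or can simply be read as an infimum.

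\textbf{Main obstacle.} The only delicate point is the one-sided tail bound in Step 1, which must hold for all $\mu\ge0$ including $\mu=0$. The key observation is that the second term $\Phi\bigl((r+\mu)/\sigma\bigr)-1$ is nonpositive, so dropping it is valid. A secondary check is that $\sigma_{\mathrm c}$ is treated as fixed given $\Theta=\theta$.
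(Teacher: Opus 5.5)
Your proof is correct and follows the paper's argument. Like the paper, you drop the nonpositive term $\Phi\bigl((r+\mu)/\sigma\bigr)-1$ to get $r\ge a\theta+\sigma_B\Phi^{-1}(\alpha_0)$ and $r\ge(1-a)\theta+\sigma_{\mathrm c}\Phi^{-1}(\alpha_0)$, average to cancel $a$, and conclude by monotonicity of $\beta$. The only cosmetic differences are that the paper applies both constraints at the single radius $r_{\alpha_0}(a)$ instead of bounding each $r_g(a)$ and using $\max\{x,y\}\ge(x+y)/2$, and that it shows the minimum exists via continuity of $r_g(a)$ in $a$, which you simply assert.
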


\begin{rem}
    If $r_{\mathrm{ato}}(1)>r_{\mathrm{comp}}(1)$, there is a unique
$a_*\in(0,1)$ satisfying $r_{\mathrm{ato}}(a_*)=r_{\mathrm{comp}}(a_*)$. In this case, $\beta_{\alpha_0}(a)$ is strictly decreasing on $[0,a_*]$ and strictly increasing on $[a_*,1]$. Otherwise, its unique minimizer is $a_*=1$.
Moreover, $\sigma_{\mathrm c}>\sigma_B$ and $\alpha_0>1/2$ imply $r_{\mathrm{comp}}(1/2)>r_{\mathrm{ato}}(1/2)$, and hence $a_*>1/2$. Thus, equal recall targets do not imply equal training proportions, rather, the optimal mixture favors composite facts to compensate for their greater dispersion.
\end{rem}

Corollary~\ref{cor:optimal-mixture-recall} shows that an interior optimal mixture balances the acceptance radii required by the two fact populations, rather than their sample counts. An overly atomic-heavy mixture requires a wide acceptance region
to retain composite facts, and an overly composite-heavy mixture does the opposite. Mixture selection can reposition the shared center, but cannot remove the separation or dispersion of the two populations. Consequently, even the optimal mixture incurs a false-positive cost when both recall requirements must be met. For fixed variances, greater separation $\theta$ raises the lower bound in~\eqnref{eq:gaussian-optimal-fpr-lower}; increasing either standard deviation or the recall target also raises this bound.

\section{Experiments}
\label{sec:experiments-relational-design}
In the preceding analysis, we lower-bound atomic hallucination under a shared information budget, revealing competition between rule learning and atomic-fact memory (Theorem~\ref{thm:joint-information-budget}). The Gaussian analysis further shows that the cost of recalling both fact types depends on the training mixture and the underlying distributional structure (Corollary~\ref{cor:optimal-mixture-recall}). Motivated by these results, we design a controlled relational task in the rules-and-facts setting, where composite-fact exposure adjusts the training mixture and rule count varies the structural learning burden. 
Across model sizes, we examine how these two factors interact with capacity to shape atomic hallucination.

\begin{figure}[!ht]
    \centering
    \includegraphics[width=0.96\linewidth]{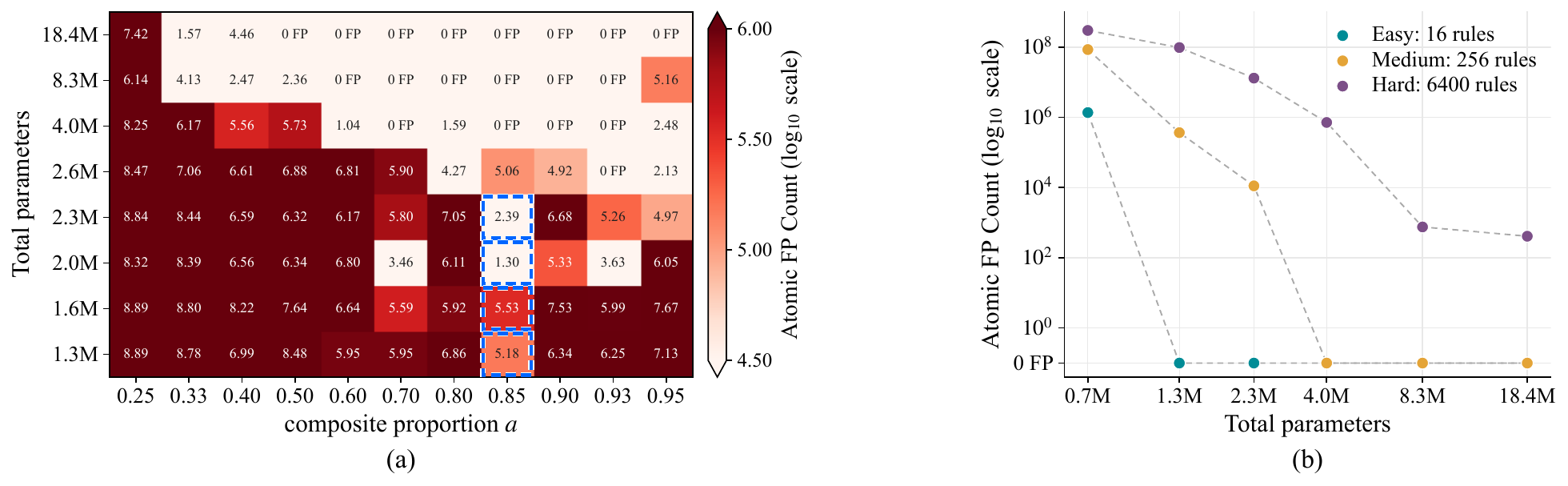}
    
    \caption{AtomicFP at perfect fact recall across model sizes: (a) varying composite training proportion $a$; (b) varying rule count at fixed $a=0.5$.}
    \label{fig:atomic-fpr-g3}
    
\end{figure}

\textbf{Task setup.}
We construct a controlled rules-and-facts knowledge graph, inspired
by~\citet{wang2024grokked,yao2026compositional}, to test whether
a model can retain atomic facts and reuse composition rules.
Entities are nodes, and each fact $z=(e_h,r_i,e_t)$ is a directed edge.
Under the rule $r_a\circ r_b=r_c$, the atomic facts $(e_1,r_a,e_2)$ and
$(e_2,r_b,e_3)$ imply the composite fact $(e_1,r_c,e_3)$.
In a real-world
analogue, Alice's father is Bob and Bob's mother is Carol, so Carol is
Alice's grandmother.

\textbf{Training and evaluation.}
We train randomly initialized Transformers autoregressively with varying parameter counts 
on symbolic three-token sequences $(e_h,r_i,e_t)$, mirroring
language-model pretraining. 
The training set includes all $n_1$ atomic facts in
$B$ and $n$ composite facts from $\widetilde Z_{\mathrm{comp}}$.
We fix $n_1$ and vary $n$ to adjust the composite proportion
$a=n/(n_1+n)$ over $[0.25,0.95]$.
To evaluate hallucination for a trained model $w$, we use the predicted
tail probability $q(w,z)\coloneqq p_w(e_t\mid e_h,r_i)$ as the
factuality score.
We construct the negative set
$B^-
\coloneqq
\bigl\{(e_h,r_i,e):
(e_h,r_i,e_t)\in B,\ e\neq e_t\bigr\}$ and report
$
\mathrm{AtomicFP}
\coloneqq
\sum_{z\in B^-}
\mathbf 1\{q(w,z)\geq\tau_*\}$,
where $\tau_*\coloneqq
\min_{z\in B\cup\widetilde Z_{\mathrm{comp}}}q(w,z)$.
With $|B|$ and entity vocabulary fixed, $|B^-|$ is constant, so $\mathrm{AtomicFP}$ is proportional
to the empirical atomic hallucination rate at perfect fact recall.
We also fix $n_1$ and $a$ and vary the number of rules across
$16$, $256$, and $6400$ to examine how the rule-learning burden
changes model capacity requirements.

\textbf{Results and analysis.} Figure~\ref{fig:atomic-fpr-g3}(a) shows that small and medium-sized models generally require intermediate mixtures to achieve low AtomicFP,  while larger models are less sensitive to this choice. 
Thus, additional composite training facts need not improve atomic factual discrimination.
Meanwhile, Figure~\ref{fig:atomic-fpr-g3}(b) shows that more composite rules increase hallucination at a fixed model size.
Although the atomic and composite training counts remain unchanged, richer rule structure demands more bits to encode the factual world, thereby imposing a tighter capacity-per-unit-fact constraint.
Overall, greater rule-related exposure or higher rule complexity can incur a cost in atomic factual discrimination.
Scaling model capacity helps ease this trade-off, consistent with  our theory in the rules-and-facts setting. We provide more experimental details and results in Appendix~\ref{appendix-exp}.

\section{Conclusion and Limitations}

In this work, we introduce a selector-density CMI framework that unifies generalization, memorization, and hallucination, and distinguish the factual memory needed for reliable discrimination from the training-selection information associated with sample dependence.
Our information-theoretic analysis uses uniformly selected fact sets and a specified separation between atomic facts and independent latent rules. Hallucination is measured through threshold-based false positives, and the joint budget bound assumes perfect atomic recall. The bounds characterize information requirements but do not determine how training dynamics allocate capacity or whether the bounds are attainable. The Gaussian model restricts both fact types to a single shared location parameter, while the synthetic relational experiments provide qualitative evidence without directly measuring the information components. Extending the joint analysis to partial recall and more general fact distributions, and connecting information allocation to training dynamics and factuality in open-ended language generation, remain important directions for future work.

\clearpage
\bibliographystyle{plainnat}
\bibliography{\arxivroot ref}

\begin{thebibliography}{35}
\providecommand{\natexlab}[1]{#1}
\providecommand{\url}[1]{\texttt{#1}}
\expandafter\ifx\csname urlstyle\endcsname\relax
  \providecommand{\doi}[1]{doi: #1}\else
  \providecommand{\doi}{doi: \begingroup \urlstyle{rm}\Url}\fi

\bibitem[Bassily et~al.(2018)Bassily, Moran, Nachum, Shafer, and Yehudayoff]{bassily2018learners}
Raef Bassily, Shay Moran, Ido Nachum, Jonathan Shafer, and Amir Yehudayoff.
\newblock Learners that use little information.
\newblock In \emph{Algorithmic Learning Theory}. PMLR, 2018.

\bibitem[Brown et~al.(2021)Brown, Bun, Feldman, Smith, and Talwar]{brown2021memorization}
Gavin Brown, Mark Bun, Vitaly Feldman, Adam Smith, and Kunal Talwar.
\newblock When is memorization of irrelevant training data necessary for high-accuracy learning?
\newblock In \emph{Proceedings of the 53rd annual ACM SIGACT symposium on theory of computing}, pages 123--132, 2021.

\bibitem[Brown et~al.(2020)Brown, Mann, Ryder, Subbiah, Kaplan, Dhariwal, Neelakantan, Shyam, Sastry, Askell, Agarwal, Herbert-Voss, Krueger, Henighan, Child, Ramesh, Ziegler, Wu, Winter, Hesse, Chen, Sigler, Litwin, Gray, Chess, Clark, Berner, McCandlish, Radford, Sutskever, and Amodei]{brown2020language}
Tom~B. Brown, Benjamin Mann, Nick Ryder, Melanie Subbiah, Jared~D. Kaplan, Prafulla Dhariwal, Arvind Neelakantan, Pranav Shyam, Girish Sastry, Amanda Askell, Sandhini Agarwal, Ariel Herbert-Voss, Gretchen Krueger, Tom Henighan, Rewon Child, Aditya Ramesh, Daniel~M. Ziegler, Jeffrey Wu, Clemens Winter, Christopher Hesse, Mark Chen, Eric Sigler, Mateusz Litwin, Scott Gray, Benjamin Chess, Jack Clark, Christopher Berner, Sam McCandlish, Alec Radford, Ilya Sutskever, and Dario Amodei.
\newblock Language models are few-shot learners.
\newblock In \emph{Advances in Neural Information Processing Systems}, volume~33, 2020.

\bibitem[Dong et~al.(2025)Dong, Guo, Gong, Wen, and Li]{dong2025exactly}
Yuxin Dong, Haoran Guo, Tieliang Gong, Wen Wen, and Chen Li.
\newblock Exactly tight information-theoretic generalization bounds via binary jensen-shannon divergence.
\newblock In \emph{Forty-second International Conference on Machine Learning}, 2025.

\bibitem[Donsker and Varadhan(1983)]{donsker1983asymptotic}
Monroe~D Donsker and SR~Srinivasa Varadhan.
\newblock Asymptotic evaluation of certain markov process expectations for large time. iv.
\newblock \emph{Communications on pure and applied mathematics}, 36\penalty0 (2):\penalty0 183--212, 1983.

\bibitem[Feldman(2020)]{feldman2020does}
Vitaly Feldman.
\newblock Does learning require memorization? a short tale about a long tail.
\newblock In \emph{Proceedings of the 52nd annual ACM SIGACT symposium on theory of computing}, pages 954--959, 2020.

\bibitem[Feldman et~al.(2025)Feldman, Kornowski, and Lyu]{feldman2025trade}
Vitaly Feldman, Guy Kornowski, and Xin Lyu.
\newblock Trade-offs in data memorization via strong data processing inequalities.
\newblock In \emph{The Thirty Eighth Annual Conference on Learning Theory}, volume 291, pages 1935--1973. PMLR, 2025.

\bibitem[Garg et~al.(2022)Garg, Tsipras, Liang, and Valiant]{garg2022what}
Shivam Garg, Dimitris Tsipras, Percy Liang, and Gregory Valiant.
\newblock What can transformers learn in-context? a case study of simple function classes.
\newblock In \emph{Advances in Neural Information Processing Systems}, volume~35, 2022.

\bibitem[Gu et~al.(2025)Gu, Lyu, Li, and Zhang]{gu2026data}
Xinran Gu, Kaifeng Lyu, Jiazheng Li, and Jingzhao Zhang.
\newblock Data mixing can induce phase transitions in knowledge acquisition.
\newblock In \emph{Advances in Neural Information Processing Systems}, volume~38, pages 163200--163244, 2025.
\newblock URL \url{https://proceedings.neurips.cc/paper_files/paper/2025/hash/eee7ae5cf0c4356c2aeca400771791aa-Abstract-Conference.html}.

\bibitem[Guo and Li(2026)]{guo2026hallucination}
Anxin Guo and Jingwei Li.
\newblock Hallucination is a consequence of space-optimality: A rate-distortion theorem for membership testing.
\newblock In \emph{Forty-third International Conference on Machine Learning}, 2026.
\newblock URL \url{https://openreview.net/forum?id=Aj7lNFaY0M}.

\bibitem[Haghifam et~al.(2022)Haghifam, Moran, Roy, and Dziugaite]{haghifam2022understanding}
Mahdi Haghifam, Shay Moran, Daniel~M Roy, and Gintare~Karolina Dziugaite.
\newblock Understanding generalization via leave-one-out conditional mutual information.
\newblock In \emph{2022 IEEE International Symposium on Information Theory (ISIT)}, pages 2487--2492. IEEE, 2022.

\bibitem[Harutyunyan et~al.(2021)Harutyunyan, Raginsky, Steeg, and Galstyan]{harutyunyan2021informationtheoretic}
Hrayr Harutyunyan, Maxim Raginsky, Greg~Ver Steeg, and Aram Galstyan.
\newblock Information-theoretic generalization bounds for black-box learning algorithms.
\newblock In \emph{Advances in Neural Information Processing Systems}, 2021.

\bibitem[Hellstr{\"o}m and Durisi(2022)]{hellstrom2022a}
Fredrik Hellstr{\"o}m and Giuseppe Durisi.
\newblock A new family of generalization bounds using samplewise evaluated {CMI}.
\newblock In \emph{Advances in Neural Information Processing Systems}, 2022.

\bibitem[Hellstr{\"o}m et~al.(2025)Hellstr{\"o}m, Durisi, Guedj, and Raginsky]{hellstrom2025generalization}
Fredrik Hellstr{\"o}m, Giuseppe Durisi, Benjamin Guedj, and Maxim Raginsky.
\newblock Generalization bounds: Perspectives from information theory and pac-bayes.
\newblock \emph{Foundations and Trends in Machine Learning}, 18\penalty0 (1):\penalty0 1--223, 2025.
\newblock \doi{10.1561/2200000112}.
\newblock URL \url{https://doi.org/10.1561/2200000112}.

\bibitem[Hoeffding(1963)]{hoeffding1963probability}
Wassily Hoeffding.
\newblock Probability inequalities for sums of bounded random variables.
\newblock \emph{Journal of the American Statistical Association}, 58\penalty0 (301):\penalty0 13--30, 1963.

\bibitem[Kalai and Vempala(2024)]{kalai2024calibrated}
Adam~Tauman Kalai and Santosh~S Vempala.
\newblock Calibrated language models must hallucinate.
\newblock In \emph{Proceedings of the 56th Annual ACM Symposium on Theory of Computing}, pages 160--171, 2024.

\bibitem[Kalai et~al.(2025)Kalai, Nachum, Vempala, and Zhang]{kalai2025language}
Adam~Tauman Kalai, Ofir Nachum, Santosh~S Vempala, and Edwin Zhang.
\newblock Why language models hallucinate, 2025.
\newblock URL \url{https://arxiv.org/abs/2509.04664}.

\bibitem[Kalai et~al.(2026)Kalai, Nachum, Vempala, and Zhang]{kalai2026evaluating}
Adam~Tauman Kalai, Ofir Nachum, Santosh~S Vempala, and Edwin Zhang.
\newblock Evaluating large language models for accuracy incentivizes hallucinations.
\newblock \emph{Nature}, 653:\penalty0 1047--1051, 2026.
\newblock \doi{10.1038/s41586-026-10549-w}.
\newblock URL \url{https://www.nature.com/articles/s41586-026-10549-w}.

\bibitem[Kalavasis et~al.(2025)Kalavasis, Mehrotra, and Velegkas]{kalavasis2025limits}
Alkis Kalavasis, Anay Mehrotra, and Grigoris Velegkas.
\newblock On the limits of language generation: Trade-offs between hallucination and mode collapse.
\newblock In \emph{Proceedings of the 57th Annual ACM Symposium on Theory of Computing}, pages 1732--1743, 2025.
\newblock \doi{10.1145/3717823.3718108}.
\newblock URL \url{https://arxiv.org/abs/2411.09642}.

\bibitem[Lewis et~al.(2020)Lewis, Perez, Piktus, Petroni, Karpukhin, Goyal, K{\"u}ttler, Lewis, Yih, Rockt{\"a}schel, et~al.]{lewis2020retrieval}
Patrick Lewis, Ethan Perez, Aleksandra Piktus, Fabio Petroni, Vladimir Karpukhin, Naman Goyal, Heinrich K{\"u}ttler, Mike Lewis, Wen-tau Yih, Tim Rockt{\"a}schel, et~al.
\newblock Retrieval-augmented generation for knowledge-intensive nlp tasks.
\newblock \emph{Advances in neural information processing systems}, 33:\penalty0 9459--9474, 2020.

\bibitem[Lu et~al.(2026)Lu, Frey, Kuijper, and Zhu]{lu2026unified}
Yang Lu, Matthias Frey, Margreta Kuijper, and Jingge Zhu.
\newblock On unified and sharpened cmi bounds for generalization errors, 2026.
\newblock URL \url{https://arxiv.org/abs/2605.21056}.

\bibitem[Morris et~al.(2026)Morris, Sitawarin, Guo, Kokhlikyan, Suh, Rush, Chaudhuri, and Mahloujifar]{morris2026how}
John~Xavier Morris, Chawin Sitawarin, Chuan Guo, Narine Kokhlikyan, G.~Edward Suh, Alexander~M Rush, Kamalika Chaudhuri, and Saeed Mahloujifar.
\newblock How much can language models memorize?
\newblock In \emph{Forty-third International Conference on Machine Learning}, 2026.
\newblock URL \url{https://openreview.net/forum?id=bA6BgSbaUi}.

\bibitem[Pan et~al.(2025)Pan, Wang, Liao, and Li]{pan2025understanding}
Zhixuan Pan, Shaowen Wang, Pengfei Liao, and Jian Li.
\newblock Understanding {LLM} behaviors via compression: Data generation, knowledge acquisition and scaling laws.
\newblock In \emph{Advances in Neural Information Processing Systems}, volume~38, pages 186604--186654, 2025.
\newblock \doi{10.52202/085713-5613}.
\newblock URL \url{https://proceedings.neurips.cc/paper_files/paper/2025/hash/f61d7778e89b9221d1ea0ce8428b7014-Abstract-Conference.html}.

\bibitem[Polyanskiy and Wu(2023)]{yury2023information}
Yury Polyanskiy and Yihong Wu.
\newblock \emph{Information Theory: From Coding to Learning}.
\newblock Cambridge university press, 2023.

\bibitem[Radford et~al.(2019)Radford, Wu, Child, Luan, Amodei, and Sutskever]{radford2019language}
Alec Radford, Jeffrey Wu, Rewon Child, David Luan, Dario Amodei, and Ilya Sutskever.
\newblock Language models are unsupervised multitask learners.
\newblock Technical report, OpenAI, 2019.

\bibitem[Rammal et~al.(2022)Rammal, Achille, Golatkar, Diggavi, and Soatto]{rammal2022on}
Mohamad~Rida Rammal, Alessandro Achille, Aditya Golatkar, Suhas Diggavi, and Stefano Soatto.
\newblock On leave-one-out conditional mutual information for generalization.
\newblock In \emph{Advances in Neural Information Processing Systems}, 2022.

\bibitem[Sanh et~al.(2022)Sanh, Webson, Raffel, Bach, Sutawika, Alyafeai, Chaffin, Stiegler, Raja, Dey, Bari, Xu, Thakker, Sharma, Szczechla, Kim, Chhablani, Nayak, Datta, Chang, Jiang, Wang, Manica, Shen, Yong, Pandey, Bawden, Wang, Neeraj, Rozen, Sharma, Santilli, Fevry, Fries, Teehan, Le~Scao, Biderman, Gao, Wolf, and Rush]{sanh2022multitask}
Victor Sanh, Albert Webson, Colin Raffel, Stephen Bach, Lintang Sutawika, Zaid Alyafeai, Antoine Chaffin, Arnaud Stiegler, Arun Raja, Manan Dey, M.~Saiful Bari, Canwen Xu, Urmish Thakker, Shanya Sharma, Eliza Szczechla, Taewoon Kim, Gunjan Chhablani, Nihal Nayak, Debajyoti Datta, Jonathan Chang, Mike Tian-Jian Jiang, Han Wang, Matteo Manica, Sheng Shen, Zheng~Xin Yong, Harshit Pandey, Rachel Bawden, Thomas Wang, Trishala Neeraj, Jos Rozen, Abheesht Sharma, Andrea Santilli, Thibault Fevry, Jason~Alan Fries, Ryan Teehan, Teven Le~Scao, Stella Biderman, Leo Gao, Thomas Wolf, and Alexander~M. Rush.
\newblock Multitask prompted training enables zero-shot task generalization.
\newblock In \emph{International Conference on Learning Representations}, 2022.

\bibitem[Steinke and Zakynthinou(2020)]{steinke2020reasoning}
Thomas Steinke and Lydia Zakynthinou.
\newblock Reasoning about generalization via conditional mutual information.
\newblock In \emph{Proceedings of Thirty Third Conference on Learning Theory}, volume 125 of \emph{Proceedings of Machine Learning Research}, pages 3437--3452. PMLR, 2020.
\newblock URL \url{https://proceedings.mlr.press/v125/steinke20a.html}.

\bibitem[Wang et~al.(2024)Wang, Yue, Su, and Sun]{wang2024grokked}
Boshi Wang, Xiang Yue, Yu~Su, and Huan Sun.
\newblock Grokking of implicit reasoning in transformers: A mechanistic journey to the edge of generalization.
\newblock In \emph{Advances in Neural Information Processing Systems}, volume~37, pages 95238--95265, 2024.
\newblock URL \url{https://proceedings.neurips.cc/paper_files/paper/2024/hash/ad217e0c7fecc71bdf48660ad6714b07-Abstract-Conference.html}.

\bibitem[Wang and Mao(2023{\natexlab{a}})]{Wang2023sampleconditioned}
Ziqiao Wang and Yongyi Mao.
\newblock Sample-conditioned hypothesis stability sharpens information-theoretic generalization bounds.
\newblock In \emph{Thirty-seventh Conference on Neural Information Processing Systems}, 2023{\natexlab{a}}.

\bibitem[Wang and Mao(2023{\natexlab{b}})]{wang2023tighter}
Ziqiao Wang and Yongyi Mao.
\newblock Tighter information-theoretic generalization bounds from supersamples.
\newblock In \emph{International Conference on Machine Learning}. PMLR, 2023{\natexlab{b}}.

\bibitem[Wang and Mao(2024)]{wang2024generalization}
Ziqiao Wang and Yongyi Mao.
\newblock Generalization bounds via conditional {$f$}-information.
\newblock In \emph{The Thirty-eighth Annual Conference on Neural Information Processing Systems}, 2024.

\bibitem[Xu and Raginsky(2017)]{xu2017information}
Aolin Xu and Maxim Raginsky.
\newblock Information-theoretic analysis of generalization capability of learning algorithms.
\newblock \emph{Advances in Neural Information Processing Systems}, 2017.

\bibitem[Yao et~al.(2026)Yao, Ren, Liao, Ding, and Liu]{yao2026compositional}
Xinhao Yao, Ruifeng Ren, Yun Liao, Lizhong Ding, and Yong Liu.
\newblock Compositional generalization from learned skills via {CoT} training: A theoretical and structural analysis for reasoning.
\newblock In \emph{The Fourteenth International Conference on Learning Representations}, 2026.
\newblock URL \url{https://openreview.net/forum?id=VLjTqLB0J9}.

\bibitem[Ye et~al.(2026)Ye, Feldman, and Talwar]{ye2026cram}
Jiayuan Ye, Vitaly Feldman, and Kunal Talwar.
\newblock Cram less to fit more: Training data pruning improves memorization of facts, 2026.
\newblock URL \url{https://arxiv.org/abs/2604.08519}.

\end{thebibliography}

\newpage

\appendix
\startcontents[appendix]
\section*{Appendix Contents}
\printcontents[appendix]{}{1}{\setcounter{tocdepth}{2}}
\clearpage
{\LARGE \centering \textbf{Appendix} \par}

\section{Related works}
\label{sec:related-work}

\subsection{CMI Bounds for Generalization}

Conditional mutual information provides a well-established framework for controlling generalization through information about training membership~\citep{steinke2020reasoning,hellstrom2025generalization}.
Conditioning on a supersample isolates the information a model reveals about the training selection, connecting generalization to compression, stability, and privacy.
Subsequent work measures this information in predictions through functional CMI (f-CMI)~\citep{harutyunyan2021informationtheoretic}, in losses through evaluated CMI (e-CMI)~\citep{hellstrom2022a}, and in loss differences through ld-CMI~\citep{wang2023tighter}.
These refinements discard information irrelevant to the evaluated loss and can sharpen parameter-level bounds.
Further improvements use sample-conditioned hypothesis stability~\citep{Wang2023sampleconditioned}, conditional $f$-information based on $f$-divergences~\citep{wang2024generalization}, and binary Jensen--Shannon divergence~\citep{dong2025exactly}.

The selection mechanism also matters.
Leave-one-out CMI connects generalization to cross-validation and stability~\citep{haghifam2022understanding,rammal2022on}; the leave-$m$-out framework unifies a broader family of supersample constructions~\citep{lu2026unified}.
Our uniform selection of $n$ samples from $N$ candidates is its unpartitioned case.
We combine this variable selection density with evaluator-level information: losses yield generalization bounds, whereas factuality scores yield limits on factual discrimination.
The common object is the information revealed about selection; its statistical role depends on what the selector and evaluator represent.

\subsection{Theoretical Perspectives on Hallucination}

Theoretical accounts of hallucination use several complementary frameworks: calibration and Good--Turing missing-mass estimation~\citep{kalai2024calibrated}, Is-It-Valid classification and evaluation incentives~\citep{kalai2025language,kalai2026evaluating}, and rate--distortion theory for membership testing~\citep{guo2026hallucination}.
Compression-based analyses explain why limited capacity can prioritize frequent knowledge over rare facts~\citep{pan2025understanding}; formal language generation exposes a trade-off between validity and coverage~\citep{kalavasis2025limits}.
The membership-testing formulation is closest to our random-fact analysis: in the sparse-fact limit, the minimum KL divergence between admissible fact and nonfact score laws characterizes the optimal memory cost~\citep{guo2026hallucination}.
Our Selector-Density CMI framework provides a unified view of factual discrimination and membership information.

\subsection{Memorization and Capacity Allocation}

Memorization can be necessary for generalization.
\citet{feldman2020does} shows that near-optimal prediction on long-tailed subpopulation mixtures can require memorizing rarely observed labels.
This connects to singleton-based hallucination bounds~\citep{kalai2024calibrated}: rare observations create a need for memorization in the former and reveal missing factual evidence in the latter.
\citet{brown2021memorization} further construct learning problems where high accuracy requires storing substantial training information, including task-irrelevant details.
Using strong data processing inequalities, \citet{feldman2025trade} characterize how the excess memorization required for accurate learning decreases as more samples become available.
Their meta-distribution formulation conditions on the data-generating parameter, separating distributional knowledge from residual sample information.
\citet{morris2026how} make a related distinction through compression-based measures of distributional knowledge and sample-specific memorization.

For factual knowledge, \citet{ye2026cram} measure information about world facts, relate it to factual accuracy, and show that data pruning can improve the use of limited memorization capacity.
\citet{gu2026data} show that mixing random facts with a structurally complex data source can reallocate finite capacity and induce phase transitions in knowledge acquisition.
The mixing ratio changes the loss reduction available per unit of capacity, so the optimal allocation across sources can shift abruptly.
In our decomposition, world information is resolved into transferable rules and atomic facts, while selector CMI measures training-choice information conditional on the factual world.

\section{Calibration at Finite Truth Density}
\label{sec:appendix-calibration}
As a by-product result, we next consider perfectly calibrated scores, without requiring the
sparse limit. Let $T\in[0,1]$ be a reported probability of truth.
Perfect calibration means
$\mathbb{P}(U=1\mid T)=T$ almost surely. Since $U$ is binary, this condition is equivalently written as $\mathbb{E}[U\mid T]=T$. We have the following bounds.

\begin{restatable}{cor}{calibratedROC}
\label{cor:calibrated-roc}
Under perfect calibration, for $\tau\in(0,1]$ with $\mathbb{P}(T\geq\tau)>0$,
the precision of the accepted claims satisfies $\mathbb{P}(U=1\mid T\geq\tau)
\geq\tau$,
and therefore
\[
\br{\alpha_{\tau}-\sqrt{\frac{h(\rho)-\mathbb{E}_{T}[h(T)]}{2\rho(1-\rho)}}}_{+}\leq\beta_\tau\leq\frac{\rho(1-\tau)}{(1-\rho)\tau}\alpha_\tau.
\]
\end{restatable}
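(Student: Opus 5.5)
Three pieces are needed: a precision bound, an upper bound on $\beta_\tau$ derived from it, and a lower bound on $\beta_\tau$ from selector information.

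\textbf{Precision.} Calibration gives $\mathbb{E}[U\mid T]=T$. On the event $\{T\ge\tau\}$, which has positive probability, the tower property yields
\[
\mathbb{P}(U=1\mid T\ge\tau)=\mathbb{E}[T\mid T\ge\tau]\ge\tau .
\]

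\textbf{Upper bound on $\beta_\tau$.} Write precision via Bayes:
\[
\mathbb{P}(U=1\mid T\ge\tau)=\frac{\rho\alpha_\tau}{\rho\alpha_\tau+(1-\rho)\beta_\tau}.
\]
The inequality $\frac{\rho\alpha_\tau}{\rho\alpha_\tau+(1-\rho)\beta_\tau}\ge\tau$ rearranges to $\rho\alpha_\tau(1-\tau)\ge(1-\rho)\tau\beta_\tau$. This gives $\beta_\tau\le\frac{\rho(1-\tau)}{(1-\rho)\tau}\alpha_\tau$. The denominator is nonzero because $\mathbb{P}(T\ge\tau)>0$.

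\textbf{Lower bound on $\beta_\tau$.} By Remark~\ref{rem:cmi-membership}, $I(U;T)=h(\rho)-H(U\mid T)$. Under calibration $U\mid T\sim\Bern(T)$, so $H(U\mid T)=\mathbb{E}_T[h(T)]$ and hence
\[
I(U;T)=h(\rho)-\mathbb{E}_T[h(T)].
\]
It remains to show $\alpha_\tau-\beta_\tau\le\TV(P,Q)\le\sqrt{I(U;T)/(2\rho(1-\rho))}$, after which taking the positive part finishes the argument. The first inequality holds because $\{T\ge\tau\}$ is a single event, so $\alpha_\tau-\beta_\tau=P(T\ge\tau)-Q(T\ge\tau)\le\TV(P,Q)$.

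\textbf{Main step: bounding $\TV(P,Q)$ by $J_\rho$.} By Lemma~\ref{cmi-js}, $I(U;T)=J_\rho(P,Q)$. Apply Pinsker to each term:
\[
\kl{P}{M_\rho}\ge 2\,\TV(P,M_\rho)^2,\qquad \kl{Q}{M_\rho}\ge 2\,\TV(Q,M_\rho)^2 .
\]
Since $P-M_\rho=(1-\rho)(P-Q)$, we have $\TV(P,M_\rho)=(1-\rho)\TV(P,Q)$; likewise $\TV(Q,M_\rho)=\rho\,\TV(P,Q)$. Therefore
\[
J_\rho\ge 2\TV^2\bigl[\rho(1-\rho)^2+(1-\rho)\rho^2\bigr]=2\rho(1-\rho)\TV^2,
\]
which is exactly the needed bound. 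Alternatively, this is the same estimate already used for Theorem~\ref{thm:random-coordinate}, with the bounded function $\mathbf 1\{t\ge\tau\}$ in place of the loss.
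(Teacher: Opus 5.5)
Your proof is correct and follows the paper's argument: the tower-property precision bound, the Bayes rearrangement for the upper bound, the identity $I(U;T)=h(\rho)-\mathbb{E}_T[h(T)]$, and the weighted Pinsker inequality $J_\rho(P,Q)\ge 2\rho(1-\rho)\TV^2(P,Q)$. The only cosmetic difference is the order of the last two steps. The paper first applies data processing to the thresholded pair $\Bern(\alpha_\tau),\Bern(\beta_\tau)$ and then invokes Pinsker; you apply Pinsker to $(P,Q)$ and then bound the single-event difference $\alpha_\tau-\beta_\tau$ by $\TV(P,Q)$, which is equivalent.
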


\begin{proof}
We first establish the change-of-measure identities used below.
For any measurable set $A\subseteq[0,1]$, calibration gives
\begin{align*}
\rho P(A)
&=\mathbb E[\mathbf 1\{T\in A\}U]
 =\mathbb E[\mathbf 1\{T\in A\}\mathbb E[U\mid T]]
 =\int_A t\,M_\rho(dt),\\
(1-\rho)Q(A)
&=\mathbb E[\mathbf 1\{T\in A\}(1-U)]
 =\mathbb E[\mathbf 1\{T\in A\}(1-T)]
 =\int_A(1-t)\,M_\rho(dt).
\end{align*}
Thus $\frac{dP}{dM_\rho}(t)=t/\rho$ and
$\frac{dQ}{dM_\rho}(t)=(1-t)/(1-\rho)$, so
\[
\alpha_\tau=\frac1\rho\int_{[\tau,1]}t\,M_\rho(dt),
\qquad
\beta_\tau=\frac1{1-\rho}\int_{[\tau,1]}(1-t)\,M_\rho(dt).
\]
Perfect calibration and the tower property give
\begin{align*}
\mathbb P(U=1\mid T\geq\tau)
&=\mathbb E[U\mid T\geq\tau]\\
&=\mathbb E\!\left[\mathbb E[U\mid T]\mid T\geq\tau\right]\\
&=\mathbb E[T\mid T\geq\tau]\geq\tau.
\end{align*}
Bayes' rule then yields
\[
\mathbb P(U=1\mid T\geq\tau)
=\frac{\rho\alpha_\tau}
{\rho\alpha_\tau+(1-\rho)\beta_\tau}\geq\tau,
\]
which gives the stated upper bound on $\beta_\tau$.

For the lower bound, perfect calibration implies
$U\mid T\sim\Bern(T)$. Therefore,
\[
I(U;T)=H(U)-H(U\mid T)
=h(\rho)-\mathbb E_T[h(T)].
\]
We also need the weighted Pinsker inequality. For
$M_\rho=\rho P+(1-\rho)Q$, Pinsker's inequality gives
\begin{align*}
\TV(P,M_\rho)&=(1-\rho)\TV(P,Q)
 \leq\sqrt{\frac{\kl{P}{M_\rho}}2},\\
\TV(Q,M_\rho)&=\rho\TV(P,Q)
 \leq\sqrt{\frac{\kl{Q}{M_\rho}}2},
\end{align*}
where $\TV$ denotes total variation distance. Multiplying the squared
inequalities by $2\rho$ and $2(1-\rho)$, respectively, and adding gives
\[
J_\rho(P,Q)\geq
2\rho(1-\rho)^2\TV^2(P,Q)
+2(1-\rho)\rho^2\TV^2(P,Q)
=2\rho(1-\rho)\TV^2(P,Q).
\]
For $A_\tau=\mathbf 1\{T\geq\tau\}$, data processing therefore implies
\[
I(U;T)\geq I(U;A_\tau)
=J_\rho\bigl(\Bern(\alpha_\tau),\Bern(\beta_\tau)\bigr)
\geq2\rho(1-\rho)(\alpha_\tau-\beta_\tau)^2.
\]
Combining this with the entropy identity yields
\[
|\alpha_\tau-\beta_\tau|
\leq\sqrt{\frac{h(\rho)-\mathbb E_T[h(T)]}{2\rho(1-\rho)}}.
\]
The stated lower bound follows because $\beta_\tau\geq0$.
\end{proof}

The two sides of the bound above serve different purposes.
The lower bound quantifies how limited truth-selector information
restricts the separation between recall and hallucination rate.
The upper bound follows from the precision guarantee associated with
calibration, but does not guarantee substantial recall.
Our result is complementary to \citet{kalai2024calibrated}, who establish statistical hallucination lower bounds under a semantic generative-calibration condition and assumptions on arbitrary facts whose truth cannot be inferred from
the observed data. Here, we instead study binary correctness calibration,
$\mathbb{P}(U=1\mid T)=T$, and the false-positive rate of a
claim-acceptance rule. These differences in calibration and error definitions preclude a direct identification of the two results.
In particular, the calibrated oracle $T=U$ achieves zero false
positives and perfect recall. Note that its existence as a joint distribution does not imply that it is learnable under the information constraints of a given training problem.

\section{Experimental Details and Additional Results}
\label{appendix-exp}

\subsection{Data Construction and Splits}

We build the knowledge graph around centers $e_c$, each connecting
$d_{\mathrm{in}}$ incoming and $d_{\mathrm{out}}$ outgoing entities.
The atomic facts $(e_{h,i},r_i^{\mathrm{in}},e_c)$ and
$(e_c,r_j^{\mathrm{out}},e_{t,j})$ generate the composite fact
$(e_{h,i},r_{ij}^{\mathrm{comp}},e_{t,j})$ through the shared rule
$r_i^{\mathrm{in}}\circ r_j^{\mathrm{out}}=r_{ij}^{\mathrm{comp}}$
(Figure~\ref{fig:exp-rule-composition}).
Different centers use disjoint entities but share the same rules, so
each rule applies across multiple factual instances. Each $(e_h,r_i)$
query has a unique true tail. We use
$d_{\mathrm{in}}=d_{\mathrm{out}}=d$, giving $d^2$ composition rules
and $2d$ atomic facts per complete center.

\begin{figure}[ht]
    \centering
    \includegraphics[width=0.75\linewidth]{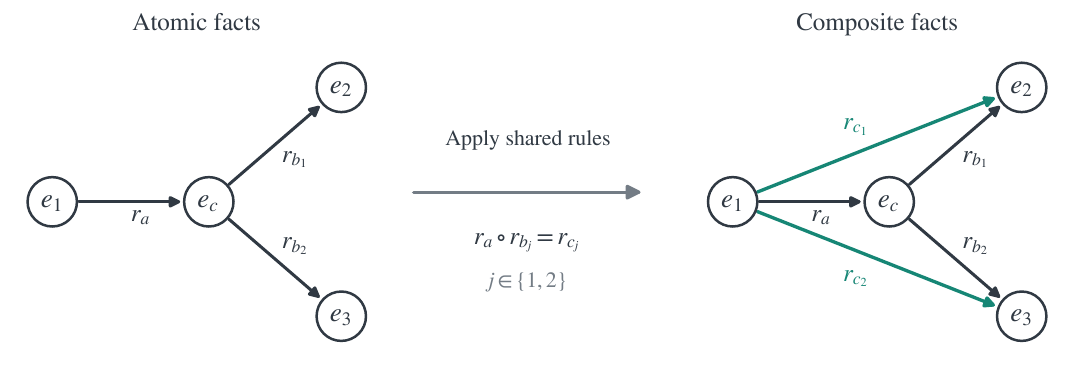}
    \caption{Shared composition rules map atomic facts to composite facts.}
    \label{fig:exp-rule-composition}
\end{figure}

All atomic facts enter training. We distinguish three center types
by their composite facts (Figure~\ref{fig:exp-training-test-construction}):
\begin{itemize}
[leftmargin=1.25em]
    \item \textbf{Full-training centers}: all composite facts enter training.
    \item \textbf{Atomic-only centers}: no composite facts enter training or testing.
    \item \textbf{Split-test centers}: half the composite facts enter training
    and the other half enter testing, with balanced rule coverage;
    each test fact is unseen during training, but both atomic premises are observed.
\end{itemize}

\begin{figure}[ht]
    \centering
    \includegraphics[width=0.8\linewidth]{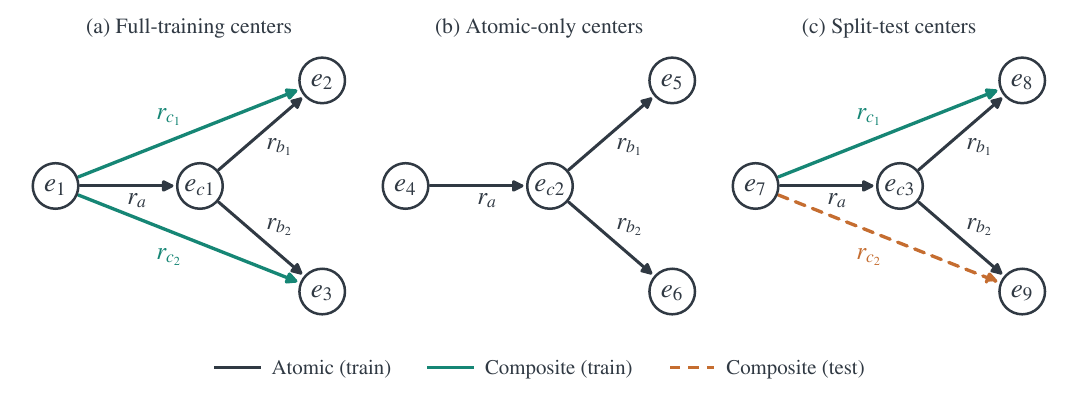}
    \caption{Training and test splits across the three center types. Solid edges enter training; dashed edges enter testing.}
    \label{fig:exp-training-test-construction}
\end{figure}

\subsection{Models and Training}

We use randomly initialized GPT-style decoder-only
Transformers~\citep{radford2019language} with the configurations in
Table~\ref{tab:exp-models}. Entities and relations are represented as
tokens, and each fact $z=(e_h,r_i,e_t)$ is encoded as
$(\mathtt{BOS},e_h,r_i,e_t)$ for autoregressive training, with loss
\begin{align*}
\mathcal L(w,z)
&=-\frac13\log p_w(e_h,r_i,e_t\mid\mathtt{BOS})\\
&=-\frac13\Bigl[\log p_w(e_h\mid\mathtt{BOS})
  +\log p_w(r_i\mid\mathtt{BOS},e_h)\\
&\hspace{4em}+\log p_w(e_t\mid\mathtt{BOS},e_h,r_i)\Bigr].
\end{align*}
Unless otherwise specified, the vocabulary contains 40,961 tokens:
32,768 entity tokens, 8,192 relation tokens, and $\mathtt{BOS}$.
The 1,024-rule graph uses 1,280 relation tokens instead, giving
a vocabulary of 34,049 tokens.
We train using AdamW with learning rate $10^{-3}$,
batch size 10,000, and weight decay $0.1$.
Composite test accuracy is the fraction of held-out composite facts
for which the highest-probability entity is the true tail.

\begin{table}[ht]
\centering
\caption{Transformer configurations with a 40,961-token vocabulary.}
\label{tab:exp-models}
\small
\begin{tabular}{ccccc}
\toprule
Parameters (M) & Layers & Hidden size & Attention heads & FFN size \\
\midrule
0.66 & 1 & 8 & 1 & 32 \\
1.31 & 1 & 16 & 1 & 64 \\
1.65 & 2 & 20 & 4 & 80 \\
1.98 & 2 & 24 & 4 & 96 \\
2.31 & 2 & 28 & 4 & 112 \\
2.65 & 2 & 32 & 4 & 128 \\
4.02 & 3 & 48 & 4 & 192 \\
8.31 & 4 & 96 & 4 & 384 \\
18.40 & 6 & 192 & 8 & 768 \\
\bottomrule
\end{tabular}
\end{table}

\subsection{Experimental Setup}

Following Section~\ref{sec:experiments-relational-design}, we examine
how composite-fact exposure and rule-learning burden interact with
model capacity to shape atomic hallucination. The two experiments
vary the training mixture and rule count, respectively, across model sizes.

\paragraph{Training mixture.}
With all $n_1$ atomic facts in training, we vary the number $n$ of
composite training facts to adjust $a=n/(n_1+n)$, keeping the atomic
facts, rules, and composite test set fixed. The total number of
atomic-only and full-training centers remains constant: adding an
atomic-only center's composite edges to training converts it into a
full-training center and increases $a$, while split-test centers remain
unchanged. We consider graphs with 4,096 and 1,024 rules,
with $a$ ranging over $[0.25,0.95]$ and $[0.23,0.94]$, respectively.

\paragraph{Rule count.}
At fixed $n_1$ and $a$, we vary the number of composition rules across
16, 256, and 6,400 while matching the atomic and composite training
counts and the composite test count. This changes the structural
learning burden without adding training facts. We compare these
settings at training proportions $a\in[0.40,0.55]$.
Table~\ref{tab:exp-data-overview} summarizes the datasets.

\begin{table}[ht]
\centering
\caption{Datasets for the training-mixture and rule-count experiments.}
\label{tab:exp-data-overview}
\small
\begin{tabular}{cccc}
\toprule
Experiment & Rules & Atomic train $n_1$ & Composite test \\
\midrule
Training mixture & 4,096 & 24,704 & 8,192 \\
 & 1,024 & 3,520 & 1,024 \\
Rule count & 16 / 256 / 6,400 & 24,704 & 6,400 \\
\bottomrule
\end{tabular}
\end{table}

\subsection{Additional Results}

Figure~\ref{fig:atomic-fpr-g3}(a) in the main text uses 4,096 rules,
$n_1=24{,}704$, and $a\in[0.25,0.95]$.
Figure~\ref{fig:g3-fpr} reports AtomicFP alongside composite test
accuracy for 1,024 rules, $n_1=3{,}520$, and $a\in[0.23,0.94]$.
Smaller models favor intermediate
mixtures for low AtomicFP, while larger models combine low FP with
near-perfect composite accuracy over a broader range. The mixture
thus affects both atomic factual discrimination and inference on
unseen composite facts.

\begin{figure}[t]
    \centering
    \includegraphics[width=\linewidth]{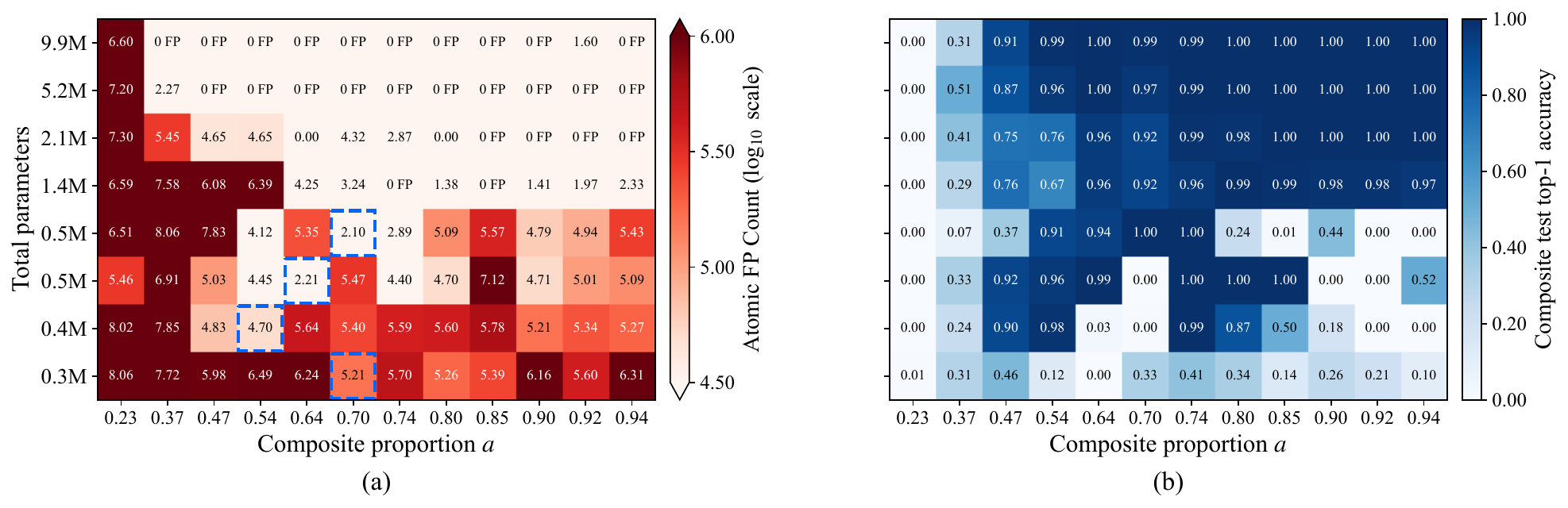}
    \caption{Training-mixture results with 1,024 rules: (a) AtomicFP at perfect fact recall; (b) composite test accuracy.}
    \label{fig:g3-fpr}
\end{figure}

Figure~\ref{fig:atomic-fpr-g3}(b) in the main text compares 16, 256,
and 6,400 rules at $n_1=24{,}704$ and $a=0.5$.
Figure~\ref{fig:4-rule-count} extends this comparison to other training
proportions $a\in[0.40,0.55]$, keeping $n_1$ fixed.
Across the displayed mixtures, models learning 6,400 rules
generally produce more atomic false positives than equally sized
models learning 16 or 256 rules. Larger models ease this trade-off:
the capacity needed to reach zero FP depends on both rule count and
training mixture.

\begin{figure}[t]
    \centering
    \includegraphics[width=\linewidth]{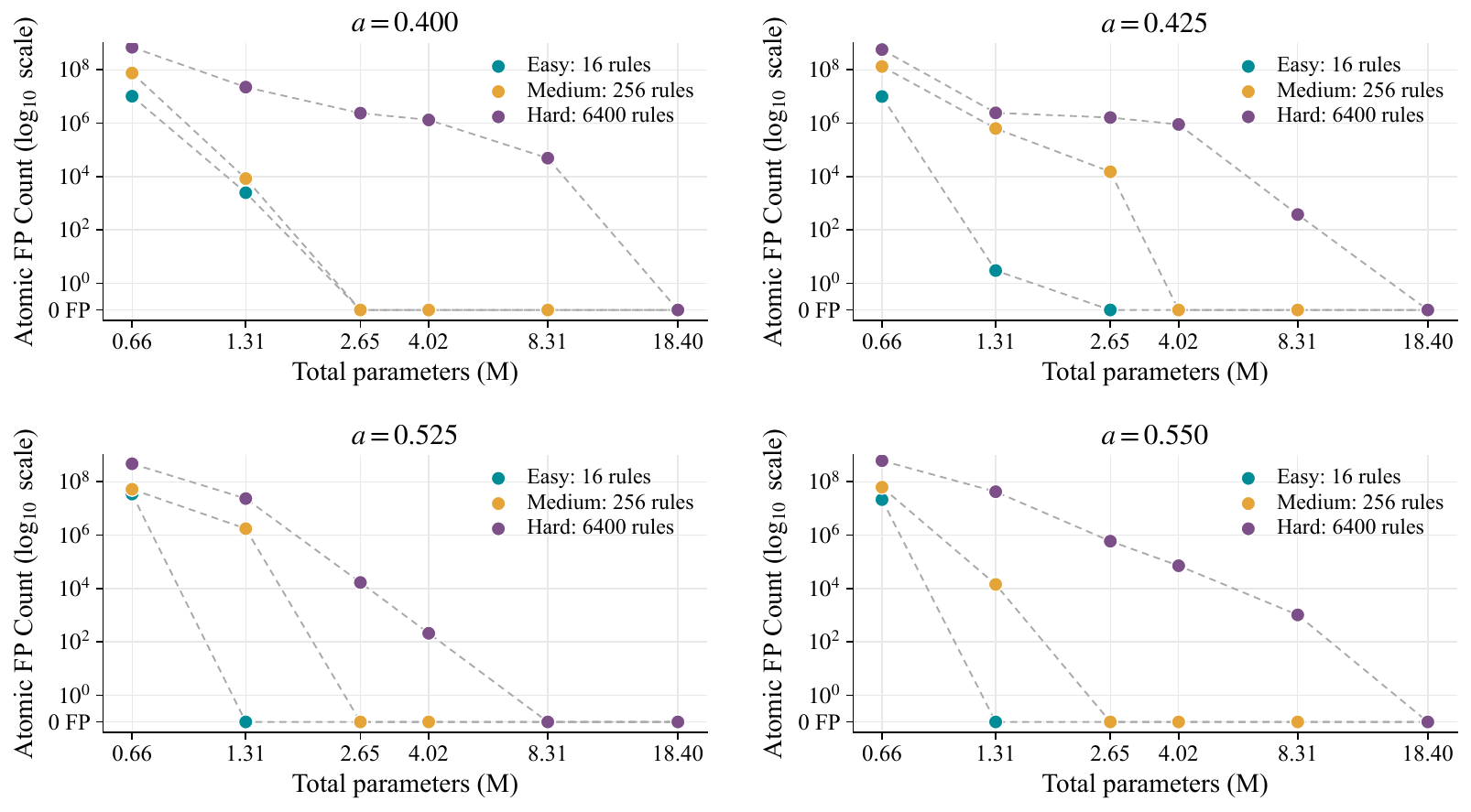}
    \caption{AtomicFP across model sizes and rule counts at four additional training mixtures.}
    \label{fig:4-rule-count}
\end{figure}

\section{Omitted Proofs}
\label{proofs}

\selectorJSidentity*

\begin{proof}
Since $U_i\sim\Bern(\rho)$, the marginal law of $T_i$ is
$M_\rho=\rho P+(1-\rho)Q$. By the definition of mutual information,
\begin{align*}
I(U_i;T_i)
&=\mathbb E_{U_i}\mathbb E_{T_i\mid U_i}
 \left[\log\frac{dP_{T_i\mid U_i}}{dP_{T_i}}(T_i)\right]\\
&=\rho\,\mathbb E_{T_i\mid U_i=1}
 \left[\log\frac{dP}{dM_\rho}(T_i)\right]
 +(1-\rho)\,\mathbb E_{T_i\mid U_i=0}
 \left[\log\frac{dQ}{dM_\rho}(T_i)\right]\\
&=\rho\kl{P}{M_\rho}+(1-\rho)\kl{Q}{M_\rho}\\
&=J_\rho(P,Q).
\end{align*}
\end{proof}

For $a,b\in[0,1]$,
let $d_{\mathrm{JS},\rho}(a,b)\defeq J_\rho\bigl(\Bern(a),\Bern(b)\bigr)$ be the binary JS divergence, which is also used in \cite{dong2025exactly}. We then have the following theorem.

\begin{thm}[Full statement of Theorem~\ref{thm:random-coordinate}]
\label{thm:random-coordinate-full}
Let $J\sim\Unif([N])$ be independent of
$(\widetilde Z,K,W)$, and define
$U\defeq\ones\{J\in K\}$ and $T\defeq T_J$. Then
\begin{align}
d_{\mathrm{JS},\rho}\!\left(\ex{}{\hat L_n},\ex{}{\hat L_m}\right)
&\le I(U;T)
\le I(U;T\mid\widetilde Z,J)
=\frac{1}{N}\sum_{i=1}^N I(U_i;T_i\mid\widetilde Z).
\label{eq:sharp-js}
\end{align}
In particular,
\begin{equation}
\abs{\mathcal E(\calA)}
\le
\sqrt{\frac{I(U;T\mid\widetilde Z,J)}{2\rho(1-\rho)}}.
\label{eq:random-coordinate-bound-full}
\end{equation}
\end{thm}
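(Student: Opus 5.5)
The plan is to prove the chain in \eqref{eq:sharp-js} from left to right, then derive \eqref{eq:random-coordinate-bound-full} with a weighted Pinsker step.

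\textbf{First inequality.} Since $J\sim\Unif([N])$ and $K$ is a uniform subset of size $n$ drawn independently of $\widetilde Z$, we have $U\sim\Bern(\rho)$. By exchangeability, $\ex{}{T\mid U=1}=\ex{}{\hat L_n}$ and $\ex{}{T\mid U=0}=\ex{}{\hat L_m}$. Indeed, conditioning on $U=1$ makes $J$ uniform over $K$, so $\ex{}{T\mid U=1}=\ex{}{\frac1n\sum_{i\in K}T_i}$, and similarly on $K^c$. Lemma~\ref{cmi-js} gives $I(U;T)=J_\rho(P,Q)$. Because $T\in[0,1]$, we pass to a Bernoulli channel: given $T=t$, draw $Y\sim\Bern(t)$. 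Then $Y\mid U=1\sim\Bern(\ex{}{\hat L_n})$ and $Y\mid U=0\sim\Bern(\ex{}{\hat L_m})$. Data processing along $U\to T\to Y$ yields $I(U;T)\ge I(U;Y)=d_{\mathrm{JS},\rho}(\ex{}{\hat L_n},\ex{}{\hat L_m})$, again by Lemma~\ref{cmi-js} applied to $Y$.

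\textbf{Middle inequality and equality.} The equality comes from $I(U;T\mid\widetilde Z,J)=\frac1N\sum_i I(U_J;T_J\mid\widetilde Z,J=i)$, since $J$ is independent of $(\widetilde Z,K,W)$. For the inequality, write
\[
I(U;T)\le I(U;T,\widetilde Z,J)=I(U;\widetilde Z,J)+I(U;T\mid\widetilde Z,J).
\]
It then suffices that $I(U;\widetilde Z,J)=0$. This holds because, conditional on $J=i$, $U=U_i\sim\Bern(\rho)$ is independent of $\widetilde Z$, since $K\indp\widetilde Z$. This is the step I would check most carefully: it is exactly where the selector's independence from the supersample and from $J$ enters.

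\textbf{Final bound.} I would prove the weighted Pinsker bound $J_\rho(P,Q)\ge 2\rho(1-\rho)\TV^2(P,Q)$, specialized to Bernoulli laws, so that $d_{\mathrm{JS},\rho}(a,b)\ge 2\rho(1-\rho)(a-b)^2$. The argument applies Pinsker to $\kl{P}{M_\rho}$ and $\kl{Q}{M_\rho}$, using $\TV(P,M_\rho)=(1-\rho)\TV(P,Q)$ and $\TV(Q,M_\rho)=\rho\TV(P,Q)$, exactly as in the calibration appendix. Combining this with the chain, and using $\mathcal E(\calA)=\ex{}{\hat L_m}-\ex{}{\hat L_n}$, gives $2\rho(1-\rho)\mathcal E(\calA)^2\le I(U;T\mid\widetilde Z,J)$. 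Rearranging yields \eqref{eq:random-coordinate-bound-full}.
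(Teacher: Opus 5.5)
Your proposal is correct and follows the paper's proof step for step. It uses the identities $\mathbb E[T\mid U=1]=\mathbb E[\hat L_n]$ and $\mathbb E[T\mid U=0]=\mathbb E[\hat L_m]$, the $\Bern(T)$ randomization with data processing and Lemma~\ref{cmi-js}, the independence $U\indp(\widetilde Z,J)$ to pass to $I(U;T\mid\widetilde Z,J)$, averaging over $J$, and the weighted Pinsker bound $J_\rho(P,Q)\ge 2\rho(1-\rho)\TV^2(P,Q)$. The only cosmetic difference is that you get the risk identities by conditioning on $U=1$ (valid since $\mathbb P(U=1\mid K,\widetilde Z,W)=\rho$, so $J$ is uniform on $K$), whereas the paper computes $\frac1n\sum_i\mathbb E[U_iT_i]$ directly.
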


\begin{proof}
A direct calculation gives
\begin{align*}
\mathbb E[\hat L_n]
&=\frac1n\sum_{i=1}^N\mathbb E[U_iT_i]
 =\frac1N\sum_{i=1}^N\mathbb E[T_i\mid U_i=1]
 =\mathbb E[T\mid U=1],\\
\mathbb E[\hat L_m]
&=\frac1m\sum_{i=1}^N\mathbb E[(1-U_i)T_i]
 =\frac1N\sum_{i=1}^N\mathbb E[T_i\mid U_i=0]
 =\mathbb E[T\mid U=0].
\end{align*}
To obtain the binary projection, given $T\in[0,1]$, output one with
probability $T$ and zero otherwise, using randomness independent of $U$.
The conditional output laws are
$\Bern(\mathbb E[T\mid U=1])$ and
$\Bern(\mathbb E[T\mid U=0])$.
By Lemma~\ref{cmi-js}, the mutual information between $U$ and this
binary output is the weighted JS divergence between these laws.
Since this randomization uses only $T$, data processing gives
\[
d_{\mathrm{JS},\rho}\bigl(\mathbb E[\hat L_n],\mathbb E[\hat L_m]\bigr)
\leq I(U;T).
\]

Although $U$ is a function of $(J,K)$, uniform subset selection and
$K\indp\widetilde Z$ imply
$\mathbb P(U=1\mid\widetilde Z,J)=n/N=\rho$.
Thus $U\indp(\widetilde Z,J)$, and
\[
I(U;T)\leq I(U;T,\widetilde Z,J)
=I(U;T\mid\widetilde Z,J).
\]
Conditioning on the independent uniform index $J$ gives
\[
I(U;T\mid\widetilde Z,J)
=\frac1N\sum_{i=1}^N I(U_i;T_i\mid\widetilde Z),
\]
completing~\eqnref{eq:sharp-js}.

For the final bound, we establish the weighted JS--TV inequality.
For any $P,Q$ and $M_\rho=\rho P+(1-\rho)Q$, Pinsker's inequality gives
\begin{align*}
\TV(P,M_\rho)&=(1-\rho)\TV(P,Q)
 \leq\sqrt{\frac{\kl{P}{M_\rho}}2},\\
\TV(Q,M_\rho)&=\rho\TV(P,Q)
 \leq\sqrt{\frac{\kl{Q}{M_\rho}}2},
\end{align*}
where $\TV$ denotes total variation distance. Consequently,
\begin{align*}
J_\rho(P,Q)
&=\rho\kl{P}{M_\rho}+(1-\rho)\kl{Q}{M_\rho}\\
&\geq2\rho(1-\rho)^2\TV^2(P,Q)
     +2(1-\rho)\rho^2\TV^2(P,Q)\\
&=2\rho(1-\rho)\TV^2(P,Q).
\end{align*}
Applying this inequality to the two Bernoulli risk laws yields
\[
d_{\mathrm{JS},\rho}\bigl(\mathbb E[\hat L_n],\mathbb E[\hat L_m]\bigr)
\geq2\rho(1-\rho)
\bigl(\mathbb E[\hat L_n]-\mathbb E[\hat L_m]\bigr)^2.
\]
Combining this with~\eqnref{eq:sharp-js} and
$\mathcal E(\mathcal A)=\mathbb E[\hat L_m]-\mathbb E[\hat L_n]$
proves~\eqnref{eq:random-coordinate-bound-full}
and~\eqnref{eq:random-coordinate-bound}.
\end{proof}

\globalECMIBound*

\begin{proof}
For $t^N\in[0,1]^N$ and $k\subseteq[N]$ with $|k|=n$, let
\[
G(t^N,k)=\frac1m\sum_{i\notin k}t_i-\frac1n\sum_{i\in k}t_i.
\]
Fix $\widetilde Z=\tilde z$ and compare the true conditional joint
law of $(T^N,K)$ with the decoupled law
\[
Q_{\tilde z}=P_{T^N\mid\widetilde Z=\tilde z}\otimes P_K.
\]
Under $Q_{\tilde z}$, $K$ is a uniform subset independent of $T^N$,
and $\mathbb E_{Q_{\tilde z}}[G\mid T^N]=0$.
The sampling-without-replacement comparison and Hoeffding's
lemma~\citep{hoeffding1963probability} give, uniformly in $t^N$,
\begin{align*}
\log\mathbb E_K[e^{\lambda G(t^N,K)}]
&\leq\frac{\lambda^2N^2}{8nm^2},\\
\log\mathbb E_K[e^{\lambda G(t^N,K)}]
&\leq\frac{\lambda^2N^2}{8mn^2}.
\end{align*}
Indeed,
\[
G(t^N,K)
=\frac Nm\left(\frac1N\sum_i t_i-\frac1n\sum_{i\in K}t_i\right)
=\frac Nn\left(\frac1m\sum_{i\notin K}t_i-\frac1N\sum_i t_i\right),
\]
so the two bounds follow by applying the sample-mean exponential-moment
bound to the selected and unselected subsets, respectively.

Since $K\indp\widetilde Z$, the KL divergence between the true
conditional law and $Q_{\tilde z}$ equals
$I(T^N;K\mid\widetilde Z=\tilde z)$.
The Donsker--Varadhan variational inequality~\citep{donsker1983asymptotic},
applied to $\lambda G$ and $-\lambda G$, gives, for $\lambda>0$,
\[
\left|\mathbb E[G\mid\widetilde Z=\tilde z]\right|
\leq
\frac{I(T^N;K\mid\widetilde Z=\tilde z)}{\lambda}
+\frac{\lambda N^2}{8nm^2}.
\]
Optimizing over $\lambda>0$ yields
\[
\left|\mathbb E[G\mid\widetilde Z=\tilde z]\right|
\leq\frac Nm
\sqrt{\frac{I(T^N;K\mid\widetilde Z=\tilde z)}{2n}}.
\]
The second exponential-moment bound yields the complementary inequality
with $n,m$ interchanged. Averaging each bound over $\widetilde Z$
and applying Jensen's inequality gives
\[
|\mathcal E(\mathcal A)|
\leq\min\left\{
\frac Nm\sqrt{\frac{\eCMI^\ell_\rho(\mathcal A)}{2n}},
\frac Nn\sqrt{\frac{\eCMI^\ell_\rho(\mathcal A)}{2m}}
\right\},
\]
as claimed.
\end{proof}

\begin{rem}[Relation to existing CMI bounds]
\label{rem:existing-cmi-bounds}
Our fixed-cardinality selector is close to the leave-$m$-out
supersample setting~\citep{lu2026unified}.
When $N=2n$ and $m=n$, Theorem~\ref{thm:global-ecmi} reduces to
$\abs{\mathcal E(\calA)}
    \le
    \sqrt{
        \frac{2\eCMI^\ell_{1/2}(\calA)}{n}
    }$, which has the same numerical form as the evaluated pairwise-CMI bound~\citep{harutyunyan2021informationtheoretic,hellstrom2022a,wang2023tighter}. The underlying selector laws, however, are different. The original CMI construction selects exactly one instance from each of $n$ prescribed pairs using independent Bernoulli bits~\citep{steinke2020reasoning}, whereas our selector is uniform over all size-$n$ subsets of $[2n]$. It may therefore select both instances from
one pair and neither from another. Thus, $\rho=1/2$ recovers the balanced JS weighting, but not the same joint selector law or, in general, the same CMI quantity.
Our purpose here is to establish a common selector-information perspective, rather than to improve upon existing generalization bounds. For a comprehensive treatment, including tighter bounds, fast-rate results, and extensions beyond bounded losses, we refer to~\cite{hellstrom2025generalization}.
\end{rem}

\sparseJSKLLimit*

\begin{proof}
Hold the conditional score laws $P,Q$ fixed as $\rho\downarrow0$, and
let $M_\rho=\rho P+(1-\rho)Q$. Since
\[
\TV(M_\rho,Q)=\rho\TV(P,Q)\longrightarrow0,
\]
lower semicontinuity of KL divergence gives
\[
\kl{P}{Q}
\leq\liminf_{\rho\downarrow0}\kl{P}{M_\rho}.
\]
By nonnegativity of KL divergence,
\[
\frac{J_\rho(P,Q)}{\rho}
=\kl{P}{M_\rho}
+\frac{1-\rho}{\rho}\kl{Q}{M_\rho}
\geq\kl{P}{M_\rho}.
\]
Consequently,
\begin{equation}
\label{eq:sparse-js-kl-liminf}
\liminf_{\rho\downarrow0}\frac{J_\rho(P,Q)}{\rho}
\geq\kl{P}{Q}.
\end{equation}
If $\kl{P}{Q}=\infty$, this already proves the claim in the
extended-real sense.

Now suppose that $\kl{P}{Q}<\infty$, so $P\ll Q$.
Expanding the divergences and using $M_\rho=\rho P+(1-\rho)Q$ yields
\begin{align*}
\rho\kl{P}{Q}-J_\rho(P,Q)
&=\rho\int\log\frac{dM_\rho}{dQ}\,dP\\
&\quad +(1-\rho)\int\log\frac{dM_\rho}{dQ}\,dQ\\
&=\int\log\frac{dM_\rho}{dQ}\,dM_\rho\\
&=\kl{M_\rho}{Q}.
\end{align*}
Thus
\[
\frac{J_\rho(P,Q)}{\rho}
=\kl{P}{Q}-\frac{\kl{M_\rho}{Q}}{\rho}
\leq\kl{P}{Q}.
\]
Together with~\eqnref{eq:sparse-js-kl-liminf}, this proves
\[
\lim_{\rho\to0}\frac{J_\rho(P,Q)}{\rho}
=\kl{P}{Q}.
\]
Lemma~\ref{cmi-js} gives $I(U;T)=J_\rho(P,Q)$, completing the stated
limit. In the finite-KL case, the same identity also implies
\[
\frac{\kl{M_\rho}{Q}}{\rho}
=\kl{P}{Q}-\frac{J_\rho(P,Q)}{\rho}
\longrightarrow0,
\]
so $\kl{M_\rho}{Q}=o(\rho)$ follows from the limit rather than being
assumed.
\end{proof}

\FPRrandomfact*

\begin{proof}
The claim universe $\widetilde Z$ is fixed, so conditioning on it is
implicit. Recall that $U_i=\ones\{i\in K\}$, $U=U_J$, and $T=T_J$.
Since $K$ is uniform over the size-$n$ subsets of $[N]$,
$\mathbb P(U_i=1)=\binom{N-1}{n-1}/\binom{N}{n}=\rho$ for every $i$.
Thus $U\indp J$, and the independence of $J$ from $(K,T^N)$ gives
\begin{align*}
J_\rho(P,Q)
&=I(U;T)
\leq I(U;T,J)
=I(U;T\mid J)\\
&=\frac1N\sum_{i=1}^N I(U_i;T_i).
\end{align*}
To bound this sum by the information in the full score vector, note
that $(U_1,\ldots,U_N)$ determines $K$ and conversely. Conditional
entropy subadditivity and the fact that $T_i$ is a coordinate of $T^N$
therefore imply
\[
H(K\mid T^N)
=H(U_1,\ldots,U_N\mid T^N)
\leq\sum_{i=1}^N H(U_i\mid T^N)
\leq\sum_{i=1}^N H(U_i\mid T_i).
\]
Using $H(U_i)=h(\rho)$ and $H(K)=\log\binom{N}{n}$, we obtain
\begin{align*}
\sum_{i=1}^N I(U_i;T_i)
&=Nh(\rho)-\sum_{i=1}^N H(U_i\mid T_i)\\
&\leq Nh(\rho)-H(K\mid T^N)\\
&=I(T^N;K)+Nh(\rho)-\log\binom{N}{n}.
\end{align*}
Since $N\rho=n$, the assumed information budget yields
\begin{equation}
\label{eq:random-fact-score-information-bound}
\frac{J_\rho(P,Q)}{\rho}
\leq\frac{I(T^N;K)}{n}
+\frac{Nh(\rho)-\log\binom{N}{n}}{n}
\leq R+\frac{Nh(\rho)-\log\binom{N}{n}}{n}.
\end{equation}
The correction accounts for the dependence among the fixed-cardinality
membership indicators. As $n\to\infty$ and $\rho\to0$, we also have
$N-n\to\infty$, and Stirling's formula gives
\[
\log\binom{N}{n}
=Nh(\rho)-\frac12\log\bigl(2\pi n(1-\rho)\bigr)
+O\!\left(\frac1n+\frac1{N-n}\right).
\]
Consequently,
\[
\frac{Nh(\rho)-\log\binom{N}{n}}{n}
=\frac{\log\bigl(2\pi n(1-\rho)\bigr)}{2n}
+O\!\left(\frac1{n^2}+\frac1{n(N-n)}\right)
=o(1),
\]
so \eqnref{eq:random-fact-score-information-bound} gives
$J_\rho(P,Q)/\rho\leq R+o(1)$.

Thresholding is a post-processing of the score, so the data-processing inequality gives
\[
\frac{J_\rho(P,Q)}{\rho}
\geq \frac{J_\rho\bigl(\Bern(\alpha_\tau),\Bern(\beta_\tau)\bigr)}{\rho}.
\]
Combining this with the preceding bound and dropping the nonnegative
second term in the definition of $J_\rho$ yields
\[
\kl{\Bern(\alpha_\tau)}
{\Bern\bigl(\rho\alpha_\tau+(1-\rho)\beta_\tau\bigr)}
\leq
\frac{J_\rho\bigl(\Bern(\alpha_\tau),\Bern(\beta_\tau)\bigr)}{\rho}
\leq R+o(1).
\]
For any limiting recall--false-positive pair $(\alpha_\tau,\beta_\tau)$,
lower semicontinuity of KL divergence therefore gives
$\kl{\Bern(\alpha_\tau)}{\Bern(\beta_\tau)}\leq R$.
This argument also allows the score laws and threshold to vary along
the sparse sequence.
For $\alpha_\tau>0$, the exponential lower bound follows from
\[
\kl{\Bern(\alpha_\tau)}{\Bern(\beta_\tau)}
=-h(\alpha_\tau)-\alpha_\tau\log\beta_\tau
-(1-\alpha_\tau)\log(1-\beta_\tau)
\geq -h(\alpha_\tau)-\alpha_\tau\log\beta_\tau.
\]
Rearranging gives
$\beta_\tau\geq\exp\bigl(-(R+h(\alpha_\tau))/\alpha_\tau\bigr)$.
For $\alpha_\tau=1$, $h(1)=0$ gives $\beta_\tau\geq e^{-R}$.
\end{proof}

\begin{figure}[t]
    \centering
    \includegraphics[width=0.6\linewidth]{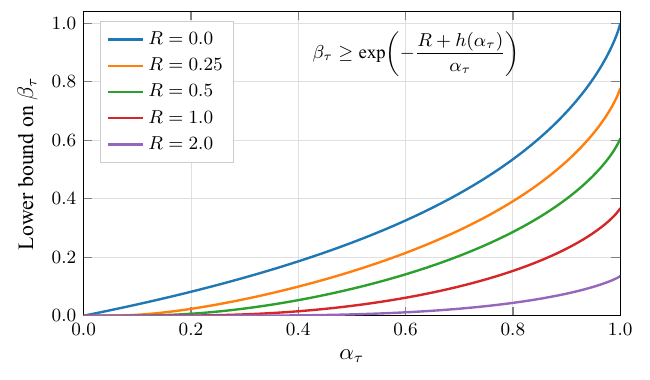}
    \caption{Lower bounds on the false-positive hallucination rate versus fact recall in the sparse random-fact setting, under a per-fact information budget $R$.}
    \label{fpr-recall-bound}
\end{figure}

\threeMIdecomposition*

\begin{proof}
Since $S$ is a deterministic function of $(B,\Theta,K)$ and $(B,\Theta,K)\to S\to W$ forms a Markov chain, the Markov property and the chain rule give
\begin{align*}
    I(W;S)=I(W;B,\Theta,K)=I(W;\Theta)
      +I(W;B\mid\Theta)
      +I(W;K\mid B,\Theta).
\end{align*}
Moreover, $K$ is independent of $(B,\Theta)$, and the
conditional distribution of $W$ depends on $(B,\Theta,K)$
only through $\widetilde Z_K$.
Consequently,
\[
    (W,K)\indp (B,\Theta)\mid\widetilde Z.
\]
The first equality below holds because $\widetilde Z$ is a deterministic function of
$(B,\Theta)$; the second follows from the conditional independence above:
\[
    I(W;K\mid B,\Theta)
    =
    I(W;K\mid B,\Theta,\widetilde Z)
    =
    I(W;K\mid\widetilde Z).
\]
Substituting this identity proves the decomposition.
\end{proof}

\begin{NoHyper}
\jointInformationBudget*
\end{NoHyper}
\begin{proof}
We first lower-bound the information used for rule recovery and sample selection, then relate the remaining atomic-fact budget to the false-positive rate under perfect recall.
By Fano's inequality,
\begin{align}
I(W;\Theta)
&=
H(\Theta)-H(\Theta\mid W) \notag\\
&\geq
\log N_\Theta
-h(\varepsilon_\Theta)
-\varepsilon_\Theta\log(N_\Theta-1)
=
C_{\mathrm{rule}}.
\label{eq:proof-rule-lower-bound}
\end{align}

Let
\[
\eCMI^\ell_\rho(\mathcal A)
=
I(T^N;K\mid\widetilde Z)
\]
denote the evaluated selector-density CMI associated with the
bounded loss. The global generalization bound gives
\[
|\mathcal E(\mathcal A)|
\leq
\min\left\{
\frac{N}{m}
\sqrt{\frac{\eCMI^\ell_\rho(\mathcal A)}{2n}},
\frac{N}{n}
\sqrt{\frac{\eCMI^\ell_\rho(\mathcal A)}{2m}}
\right\}.
\]
Inverting both inequalities yields
\begin{align}
\eCMI^\ell_\rho(\mathcal A)
&\geq
\max\left\{
\frac{2nm^2}{N^2},
\frac{2mn^2}{N^2}
\right\}
|\mathcal E(\mathcal A)|^2 \notag\\
&=
\frac{2nm\max\{n,m\}}{N^2}
|\mathcal E(\mathcal A)|^2
=
C_{\mathrm{gen}}.
\label{eq:proof-gen-lower-bound}
\end{align}
Since $T^N$ is obtained from $(W,\widetilde Z)$, the data processing
inequality gives
\[
I(W;K\mid\widetilde Z)
\geq
I(T^N;K\mid\widetilde Z)
\geq
C_{\mathrm{gen}}.
\]

By Lemma~\ref{3MI},
\[
I(W;S)
=
I(W;\Theta)
+
I(W;B\mid\Theta)
+
I(W;K\mid\widetilde Z).
\]
Combining this decomposition with $I(W;S)\leq C$,
\eqnref{eq:proof-rule-lower-bound}, and
\eqnref{eq:proof-gen-lower-bound}, we obtain
\begin{equation}
I(W;B\mid\Theta)
\leq
C-C_{\mathrm{rule}}-C_{\mathrm{gen}}.
\label{eq:proof-residual-fact-budget}
\end{equation}

It remains to lower-bound the information required for perfect
atomic recall. Let $K_1$ be the atomic-fact selector, so that
\[
K_1
\sim
\operatorname{Unif}
\left\{
A\subseteq[N_1]:|A|=n_1
\right\},
\qquad
\rho_1=\frac{n_1}{N_1}.
\]
Because the atomic candidate pool is fixed and contains distinct
claims, $B$ and $K_1$ are in one-to-one correspondence. Hence
\begin{equation}
I(W;B\mid\Theta)
=
I(W;K_1\mid\Theta).
\label{eq:proof-B-K1-equivalence}
\end{equation}

For a threshold $\tau$ and the model's factuality score $T_j$ of atomic claim $b_j$, define the set of accepted atomic claims by
\[
\widehat K_1
\coloneqq
\{j\in[N_1]:T_j\geq\tau\},
\qquad
L\coloneqq|\widehat K_1|.
\]
The fact recall satisfies
\[
\alpha_\tau
=
\frac{1}{n_1}
\mathbb E\!\left[
|K_1\cap\widehat K_1|
\right].
\]
Therefore, under perfect atomic recall, $\alpha_\tau=1$,
\[
\mathbb E\!\left[
|K_1\setminus\widehat K_1|
\right]
=
n_1(1-\alpha_\tau)
=
0.
\]
Since $|K_1\setminus\widehat K_1|$ is nonnegative, it follows that
\begin{equation}
K_1\subseteq\widehat K_1
\qquad\text{almost surely}.
\label{eq:proof-perfect-recall-containment}
\end{equation}
This is the step at which perfect recall is essential.

The false-positive hallucination rate is
\[
\beta_\tau
=
\frac{1}{N_1-n_1}
\mathbb E\!\left[
|\widehat K_1\setminus K_1|
\right].
\]
By \eqnref{eq:proof-perfect-recall-containment},
\[
L
=
n_1+|\widehat K_1\setminus K_1|
\qquad\text{almost surely},
\]
and consequently
\begin{equation}
\mathbb E[L]
=
n_1+(N_1-n_1)\beta_\tau.
\label{eq:proof-accepted-size}
\end{equation}

We now bound $H(K_1\mid W,\Theta)$ by counting the $n_1$-subsets compatible with the accepted set.
Since $K_1$ is independent of $\Theta$ and is uniform over all
$n_1$-subsets of $[N_1]$,
\[
H(K_1\mid\Theta)
=
\log\binom{N_1}{n_1}.
\]
Moreover, conditional on $(W,\Theta)$, the accepted set
$\widehat K_1$ is known. By
\eqnref{eq:proof-perfect-recall-containment}, $K_1$ must be an
$n_1$-subset of $\widehat K_1$. Thus,
\[
H(K_1\mid W,\Theta)
\leq
\mathbb E\!\left[
\log\binom{L}{n_1}
\right].
\]
Using \eqnref{eq:proof-B-K1-equivalence}, we obtain
\begin{align}
I(W;B\mid\Theta)
&\geq
\log\binom{N_1}{n_1}
-
\mathbb E\!\left[
\log\binom{L}{n_1}
\right] \notag\\
&=
\mathbb E\!\left[
\log
\frac{\binom{N_1}{n_1}}
     {\binom{L}{n_1}}
\right].
\label{eq:proof-combinatorial-information}
\end{align}

For every $n_1\leq L\leq N_1$,
\begin{align}
\frac{\binom{N_1}{n_1}}{\binom{L}{n_1}}
&=
\prod_{j=0}^{n_1-1}
\frac{N_1-j}{L-j}
\geq
\left(\frac{N_1}{L}\right)^{n_1}.
\end{align}
Therefore,
\begin{align}
I(W;B\mid\Theta)
&\geq
n_1\,
\mathbb E\!\left[
\log\frac{N_1}{L}
\right] \notag\\
&\geq
n_1
\log\frac{N_1}{\mathbb E[L]} \notag\\
&=
n_1
\log
\frac{1}
{\rho_1+(1-\rho_1)\beta_\tau},
\label{eq:proof-fact-information-lower-bound}
\end{align}
where the second inequality follows from Jensen's inequality and
the last equality follows from
\eqnref{eq:proof-accepted-size}.

Combining \eqnref{eq:proof-residual-fact-budget} and
\eqnref{eq:proof-fact-information-lower-bound} gives
\[
n_1
\log
\frac{1}
{\rho_1+(1-\rho_1)\beta_\tau}
\leq
C-C_{\mathrm{rule}}-C_{\mathrm{gen}}.
\]
Equivalently,
\[
\rho_1+(1-\rho_1)\beta_\tau
\geq
\exp\left(
-\frac{
C-C_{\mathrm{rule}}-C_{\mathrm{gen}}
}{n_1}
\right),
\]
and hence the following finite-density bound holds:
\begin{equation}
\beta_\tau
\geq
\max\left\{
0,\,
\frac{
\exp\left(
-\frac{
C-C_{\mathrm{rule}}-C_{\mathrm{gen}}
}{n_1}
\right)
-\rho_1
}
{1-\rho_1}
\right\}.
\label{eq:proof-finite-density-fpr}
\end{equation}

Finally, taking the sparse atomic-fact limit $\rho_1\to0$ in
\eqnref{eq:proof-finite-density-fpr} yields
\[
\beta_\tau
\geq
\exp\left(
-\frac{
C-C_{\mathrm{rule}}-C_{\mathrm{gen}}
}{n_1}
\right),
\]
which proves the result.
\end{proof}

\gaussianRecallLimit*

\begin{proof}[Proof of Theorem~\ref{thm:gaussian-recall}]
Throughout the proof, all probabilities and expectations are conditional
on $\Theta=\theta>0$, unless specified otherwise. In particular,
$\sigma_{\mathrm c}$ is fixed. We assume $\sigma_B>0$ and use the independence of $K$ from
$(\Theta,B,X)$.

\medskip
\noindent\textit{Convergence of the learned parameter.}
Let
\[
\overline Z_N\coloneqq\frac1N\sum_{i=1}^{n_1}\sum_{j=1}^{n_2}
\widetilde Z_{ij},
\qquad
\overline Z_{K,N}\coloneqq\frac1n\sum_{i=1}^{n_1}\sum_{j=1}^{n_2}
K_{ij}\widetilde Z_{ij}
\]
be the averages of the full and selected composite pools.
Since $\widetilde Z_{ij}=B_i+X_j$, the laws of large numbers give
\[
\overline Z_N
=\frac1{n_1}\sum_{i=1}^{n_1}B_i
+\frac1{n_2}\sum_{j=1}^{n_2}X_j
\xrightarrow{\mathbb P}\theta.
\]
Conditional on $(B,X)$, the selected entries form a simple random sample
without replacement from the $N$ composite values. Hence, for $N>1$,
\[
\mathbb E[\overline Z_{K,N}\mid B,X]=\overline Z_N,\qquad
\operatorname{Var}(\overline Z_{K,N}\mid B,X)
=\frac{1-n/N}{n(N-1)}
\sum_{i=1}^{n_1}\sum_{j=1}^{n_2}
(\widetilde Z_{ij}-\overline Z_N)^2.
\]
The empirical variance of the full Cartesian pool is the sum of the
empirical variances of the atomic values and the rule instances: the
cross term vanishes when summed over all $(i,j)$. Taking expectations,
\[
\mathbb E\!\left[\frac1N\sum_{i,j}
(\widetilde Z_{ij}-\overline Z_N)^2\right]
=\left(1-\frac1{n_1}\right)\sigma_B^2
+\left(1-\frac1{n_2}\right)\sigma_\theta^2
\le\sigma_{\mathrm c}^2.
\]
Consequently,
\[
\mathbb E[(\overline Z_{K,N}-\overline Z_N)^2]
\le\frac{N\sigma_{\mathrm c}^2}{n(N-1)}\longrightarrow0.
\]
Thus $\overline Z_{K,N}\xrightarrow{\mathbb P}\theta$. It follows that
\[
\widehat W_N
=(1-a_N)\frac1{n_1}\sum_{i=1}^{n_1}B_i
+a_N\overline Z_{K,N}
\xrightarrow{\mathbb P}a\theta.
\]
The assumption $W_N-\widehat W_N\xrightarrow{\mathbb P}0$ now yields
$W_N\xrightarrow{\mathbb P}a\theta$. No independence or distributional
assumption on $W_N-\widehat W_N$ is needed.

\medskip
\noindent\textit{Convergence of the two recalls.}
Define the empirical distribution functions of the atomic values, rule
instances, and full composite pool, respectively, by
\begin{align*}
F_N^{\mathrm{ato}}(x)&\coloneqq\frac1{n_1}\sum_{i=1}^{n_1}
\mathbf 1\{B_i\le x\},\qquad
F_{X,N}(x)\coloneqq\frac1{n_2}\sum_{j=1}^{n_2}\mathbf 1\{X_j\le x\},\\
F_N^{\mathrm{comp}}(x)&\coloneqq\frac1N\sum_{i=1}^{n_1}\sum_{j=1}^{n_2}
\mathbf 1\{\widetilde Z_{ij}\le x\}.
\end{align*}
The Glivenko--Cantelli theorem applied separately to the two marginal
samples gives
\[
\sup_x\left|F_N^{\mathrm{ato}}(x)-\Phi(x/\sigma_B)\right|
\xrightarrow{\mathbb P}0,\qquad
\sup_x\left|F_{X,N}(x)-\Phi((x-\theta)/\sigma_\theta)\right|
\xrightarrow{\mathbb P}0.
\]
These conclusions hold jointly by the union bound and do not require
independence between the two samples.

The Cartesian structure of the composite pool gives
\[
F_N^{\mathrm{comp}}(x)=\frac1{n_2}\sum_{j=1}^{n_2}F_N^{\mathrm{ato}}(x-X_j).
\]
Replacing $F_N^{\mathrm{ato}}$ by its Gaussian limit in this average
incurs an error bounded by
$\sup_y|F_N^{\mathrm{ato}}(y)-\Phi(y/\sigma_B)|$.
For the resulting average, the identity
\[
\frac1{n_2}\sum_{j=1}^{n_2}\Phi\!\left(\frac{x-X_j}{\sigma_B}\right)
=\int_{\mathbb R}F_{X,N}(x-b)\frac1{\sigma_B}\phi(b/\sigma_B)\,db
\]
and the Gaussian convolution formula yield
\begin{align*}
\sup_x\left|F_N^{\mathrm{comp}}(x)
-\Phi\!\left(\frac{x-\theta}{\sigma_{\mathrm c}}\right)\right|
&\le\sup_x\left|F_N^{\mathrm{ato}}(x)-\Phi(x/\sigma_B)\right|\\
&\quad+\sup_x\left|F_{X,N}(x)-\Phi((x-\theta)/\sigma_\theta)\right|
\xrightarrow{\mathbb P}0.
\end{align*}
Thus the empirical distribution of the composite pool converges to
$\mathcal N(\theta,\sigma_{\mathrm c}^2)$ even without assuming that
$B$ and $X$ are independent given $\Theta$.

To evaluate recall at the data-dependent center, let $F_N$ denote either
$F_N^{\mathrm{ato}}$ or $F_N^{\mathrm{comp}}$, and let $F$ denote its
continuous Gaussian limit established above. The corresponding empirical
proportion in $[W_N-r,W_N+r]$ equals
$F_N(W_N+r)-F_N((W_N-r)^-)$, where $F_N(t^-)$ is the left limit at $t$.
Continuity of $F$ gives the pathwise bound
\[
\left|F_N(W_N+r)-F_N((W_N-r)^-)
-\bigl(F(W_N+r)-F(W_N-r)\bigr)\right|
\le2\sup_x|F_N(x)-F(x)|.
\]
Combining uniform convergence with $W_N\xrightarrow{\mathbb P}a\theta$
therefore proves, for every fixed $r\ge0$,
\begin{align*}
\alpha_N^{\mathrm{ato}}(r)
&\xrightarrow{\mathbb P}
\Phi\!\left(\frac{a\theta+r}{\sigma_B}\right)
-\Phi\!\left(\frac{a\theta-r}{\sigma_B}\right)
=\alpha^{\mathrm{ato}}(r;a),\\
\alpha_N^{\mathrm{comp}}(r)
&\xrightarrow{\mathbb P}
\Phi\!\left(\frac{a\theta+r-\theta}{\sigma_{\mathrm c}}\right)
-\Phi\!\left(\frac{a\theta-r-\theta}{\sigma_{\mathrm c}}\right)
=\alpha^{\mathrm{comp}}(r;a).
\end{align*}
The final identities use $\Phi(-x)=1-\Phi(x)$ and also cover $r=0$.
No independence between $W_N$ and the facts is used. Since recall is
computed over the full composite pool, no condition on $N-n$ is needed.

\medskip
\noindent\textit{Convergence of the false-positive rate.}
For this paragraph, probabilities involving $Z$ refer only to the fresh
draw $Z\sim\mathcal N(W_N,1)$ with $(B,X,W_N)$ held fixed, as in the
definition of $\beta_N$.
Define $\varepsilon_N\coloneqq\mathbb P(Z\in\mathcal T_N)$.
The total length of $\mathcal T_N$ is at most
$2(n_1+N)\delta_N$, and the density of $Z$ is everywhere at most
$1/\sqrt{2\pi}$. Since $n_1\le N$, it follows that
\[
0\le\varepsilon_N
\le\frac{2(n_1+N)\delta_N}{\sqrt{2\pi}}
\le\frac{4N\delta_N}{\sqrt{2\pi}}
\longrightarrow0.
\]
Moreover, $\mathbb P(|Z-W_N|\le r)=2\Phi(r)-1=\beta(r)$ for every
realized $W_N$. For all sufficiently large $N$, the definition of
$\beta_N$ therefore gives
\begin{align*}
|\beta_N(r)-\beta(r)|
&=\frac{\left|\beta(r)\varepsilon_N
-\mathbb P\bigl(|Z-W_N|\le r,\ Z\in\mathcal T_N\bigr)\right|}
{1-\varepsilon_N}\\
&\le\frac{\varepsilon_N}{1-\varepsilon_N}
\longrightarrow0.
\end{align*}
The bound is uniform over the realized facts and learned parameter.
It proves the stated convergence in probability and completes the proof.
\end{proof}

\optimalMixtureRecall*

\begin{proof}[Proof of Corollary~\ref{cor:optimal-mixture-recall}]
For each $a\in[0,1]$ and $g\in\{\mathrm{ato},\mathrm{comp}\}$,
the function $r\mapsto\alpha^g(r;a)$ is continuous and strictly
increasing from $0$ to $1$. Consequently, $r_g(a)$ is finite and is
the unique radius satisfying $\alpha^g(r_g(a);a)=\alpha_0$.
In particular, $r_{\alpha_0}(a)$ satisfies both recall constraints.
The recall formulas are continuously differentiable and their
derivatives with respect to $r$ are positive, so the implicit function
theorem also makes $r_g(a)$ continuous in $a$.
Thus $\beta_{\alpha_0}(a)$ is continuous on $[0,1]$ and attains its
minimum there.

Fix any $a\in[0,1]$ and any radius $r$ satisfying both recall
constraints. Since $\Phi(x)\le1$, the limiting recall formulas imply
\[
\alpha_0\le\alpha^{\mathrm{ato}}(r;a)
\le\Phi\!\left(\frac{r-a\theta}{\sigma_B}\right),
\qquad
\alpha_0\le\alpha^{\mathrm{comp}}(r;a)
\le\Phi\!\left(\frac{r-(1-a)\theta}{\sigma_{\mathrm c}}\right).
\]
Applying the increasing function $\Phi^{-1}$ yields
\[
r\ge a\theta+\sigma_B\Phi^{-1}(\alpha_0),
\qquad
r\ge(1-a)\theta+\sigma_{\mathrm c}\Phi^{-1}(\alpha_0).
\]
Adding these inequalities and dividing by two eliminates $a$:
\[
r\ge\frac{\theta}{2}
+\frac{\sigma_B+\sigma_{\mathrm c}}{2}\Phi^{-1}(\alpha_0).
\]
Apply this bound to $r=r_{\alpha_0}(a)$.
Because $\beta(r)=2\Phi(r)-1$ is increasing, every $a\in[0,1]$ satisfies
\[
\beta_{\alpha_0}(a)
\ge
2\Phi\!\left(
\frac{\theta}{2}
+\frac{\sigma_B+\sigma_{\mathrm c}}{2}\Phi^{-1}(\alpha_0)
\right)-1.
\]
Taking the minimum over $a$ proves
\eqnref{eq:gaussian-optimal-fpr-lower}.
\end{proof}

\end{document}